\documentclass[12pt]{iopart}
\usepackage{iopams}
\usepackage{setstack}
\usepackage{graphicx}
\usepackage{amsthm}
\newtheorem{theorem}{Theorem}

\newtheorem{lemma}{Lemma}

\theoremstyle{remark}
\newtheorem*{remark}{Remark}

\begin{document}
\newcommand{\real}{\textrm{Re}\:}
\newcommand{\sto}{\stackrel{s}{\to}}
\newcommand{\supp}{\textrm{supp}\:}
\newcommand{\wto}{\stackrel{w}{\to}}
\newcommand{\ssto}{\stackrel{s}{\to}}
\newcounter{foo}
\providecommand{\norm}[1]{\lVert#1\rVert}
\providecommand{\abs}[1]{\lvert#1\rvert}

\title{Universal angular probability distribution of three particles near zero energy threshold}

\author{Dmitry K. Gridnev}

\address{FIAS, Ruth-Moufang-Stra{\ss}e 1, D--60438 Frankfurt am Main, Germany}
\ead{gridnev@fias.uni-frankfurt.de}
\begin{abstract}
We study bound states of a 3--particle system in $\mathbb{R}^3$ described by the Hamiltonian $H(\lambda_n) = H_0 + v_{12} + \lambda_n (v_{13} + v_{23})$, where
the particle pair $\{1,2\}$ has a zero energy resonance and no bound states, while other particle pairs have neither bound states nor zero energy resonances. 
It is assumed that for a converging sequence of coupling constants $\lambda_n \to \lambda_{cr}$ the Hamiltonian $H(\lambda_n)$ has a sequence of levels 
with negative energies $E_n$  and wave functions $\psi_n$, where the sequence $\psi_n$ totally spreads in the sense that 
$\lim_{n \to \infty}\int_{|\zeta| \leq R} |\psi_n (\zeta)|^2 d\zeta = 0$ for all $R>0$. We prove that for large $n$ the
angular probability distribution of three particles determined by $\psi_n$ 
approaches the universal analytical expression, which does not depend on pair--interactions. The result has applications in Efimov physics and in the physics of halo
nuclei.
\end{abstract}

\pacs{03.65.Ge, 03.65.Db, 21.45.-v, 67.85.-d, 02.30.Tb}


\section{Introduction}\label{sec:1}

Consider the Hamiltonian of the 3--particle system in $\mathbb{R}^3$
\begin{equation}\label{hami}
 H(\lambda) = H_0 + v_{12} + \lambda (v_{13} + v_{23}) ,
\end{equation}
where $H_0$ is the kinetic energy operator with the center of mass removed,
$\lambda >0$ is the coupling constant and none of the particle
pairs has negative energy bound states. The detailed requirements on pair--potentials would be
listed in Sec.~\ref{sec:3}.
Suppose that for a converging sequence of coupling
constants $\lambda_n \to \lambda_{cr}$ there exists a sequence of bound states
$\psi_n \in D(H_0)$ such that $H(\lambda_n) \psi_n = E_n \psi_n$,
where $E_n < 0$, $\|\psi_n\| =1$ and $E_n \to 0$. The question, whether the
sequence $\psi_n$ totally spreads has been recently considered in \cite{1,2}.
In \cite{1} it was shown that $\psi_n$ does not spread if the 2--particle subsystems of $H(\lambda_n)$, $H(\lambda_{cr})$ have no zero energy resonances. 
The results of \cite{1} were generalized to many--particle systems \cite{2}, where,
in
particular, the restriction
on the sign of pair--potentials was removed. In \cite{1} under certain
conditions on pair--potentials it was proved that
 if the pair of particles $\{1,2\}$  has a zero energy resonance and $\psi_n$
for each $n$ is the ground state then the sequence $\psi_n$ totally spreads.

Here we focus again on the situation, where the pair of particles $\{1,2\}$  has
a zero energy resonance and the sequence $\psi_n (x,y)$
(\textit{not necessarily ground states!}) totally spreads. (For the definition
of Jacobi
coordinates $x,y \in \mathbb{R}^3$ see \cite{1} or Sec.~\ref{sec:3} of this paper). Recall that by definition in \cite{1} the total spreading means that 
\begin{equation}
 \lim_{n \to \infty} \int_{|x|^2 + |y|^2 \leq R}|\psi_n (x,y)|^2   \: d^3 x d^3 y\to 0 \quad (\textnormal{for all $R >0$}) . 
\end{equation}
Thereby, especially interesting is the angular probability distribution
of
three particles for large $n$, which we define below. 
Let us rewrite the wave function in the form $\psi_n (\rho , \theta ,  \hat x, \hat y) $, where the arguments are the so--called
hyperspherical coordinates \cite{babaev} $\rho := \sqrt{|x^2| + |y|^2}$,
$\theta :=
\arctan (|y|/|x|) $,
$\theta \in [0 , \pi/2]$ and $\hat x , \hat y$ are unit vectors in the directions
of $x,y$ respectively. Then by definition the angular probability distribution is
\begin{equation}\label{xw2}
 \mathcal{D}_n (\theta , \hat x , \hat y) := \cos^2 \theta \sin^2 \theta\int
\rho^5 \;  \bigl|\psi_n (\rho , \theta , \hat x, \hat y) \bigr|^2 d \rho .
\end{equation}
The normalization  $\|\psi_n \| = 1$ implies that
\begin{equation}
 \int_0^{\pi/2} d\theta \int d \Omega_x  \int d\Omega_y \; \mathcal{D}_n (\theta
, \hat x , \hat y) = 1 , 
\end{equation}
where $\Omega_{x , y}$ are the body angles associated with the unit vectors
$\hat x , \hat y$.
The main result of the present paper (proved in Theorem~\ref{th:3}) states that
\begin{equation}\label{sony}
 \mathcal{D}_\infty  (\theta , \hat x , \hat
y) := \lim_{n \to \infty}\mathcal{D}_n (\theta , \hat x , \hat
y) = \frac 1{4\pi^3} \sin^2 \theta , 
\end{equation}
where the convergence is in measure. Equation (\ref{sony}) means that all
acceptable pair--potentials produce the same limiting angular probability distribution, which 
depends solely on $\theta$. This is another example of the so--called universality in three--particle systems, which is, in particular, manifested 
in the universal asymptotic form of the infinite discrete spectrum appearing the Efimov effect \cite{yafaev,sobol}. 
Apart from the results in \cite{1,2} the proof resides on the ideas expressed in \cite{yafaev,sobol,klaus1,klaus2}. 
In the next section we shall discuss the two--particle case, this material would 
also be needed in the analysis of the three--particle case in Sec.~\ref{sec:3}. 
At the end of Sec.~\ref{sec:3} we show how the distribution in (\ref{sony}) can be derived fairly easy on a physical level of rigor (this derivation was proposed by one of the referees). 
In Sec.~\ref{sec:phys} we discuss physical applications. 

\section{The Two-Particle Case Revisited}\label{sec:2}

 Let us consider the two--particle Hamiltonian in $L^2(\mathbb{R}^3)$
\begin{equation}\label{2h}
 h(\lambda) = - \Delta_x + \lambda v(x),
\end{equation}
where $\lambda >0 $ is a coupling constant.
For the pair potential we assume that 
\begin{equation}\label{con'}
\gamma := \max\left[ \int d^3x \;  |x|^2 \bigl( 1+ |x|^\delta \bigr) |v(x)|^2 , \int d^3x \;
 \bigl(1+ |x|^\delta \bigr) |v(x)|^2 \right] < \infty, 
\end{equation}
where $0 < \delta < 1$ is some constant.

The next theorem (which must be known in some form) states that a totally
spreading sequence of bound state wave functions approaches the expression, which is independent of
the details
of the pair--interaction.
\begin{theorem}\label{th:1}
 Suppose there is a sequence of coupling constants $\lambda_n >0$
such that  $\lim_{n\to \infty} \lambda_n =
\lambda_{cr} >0 $, and
$h(\lambda_n) \psi_n = E_n \psi_n$, where $\psi_n \in D(H_0)$, $\|\psi_n\| =1$,
$E_n < 0$, $\lim_{n \to \infty} E_n = 0$. If $\psi_n$  totally spreads then
\begin{equation}\label{tbh}
 \left\| \psi_n -
e^{i\varphi_n}\frac{\sqrt{k_n}e^{-k_n|x|}}{\sqrt{2\pi}|x|}\right\| \to 0 ,
\end{equation}
where $\varphi_n \in [0, 2\pi)$ are phases and $k_n := \sqrt{|E_n|}$.
\end{theorem}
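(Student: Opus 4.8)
The plan is to pass to momentum space and reformulate the claim as: $\psi_n$ is asymptotically contained in the one–dimensional subspace $\mathbb{C}\phi_{k_n}$, where $\phi_{k_n}(x):=\frac{\sqrt{k_n}}{\sqrt{2\pi}}\,\frac{e^{-k_n|x|}}{|x|}$ is the normalized Yukawa function ($k_n=\sqrt{|E_n|}>0$); note $\phi_{k_n}=\sqrt{8\pi k_n}\,R_0(-k_n^2)\delta_0$ with $R_0(z)=(-\Delta-z)^{-1}$, and $\|\phi_{k_n}\|=1$ (here $k_n>0$ is essential). Writing the eigenvalue equation as $\psi_n=R_0(-k_n^2)f_n$ with $f_n:=-\lambda_n v\psi_n\in L^1\cap L^2$ (using $v\in L^2$ from (\ref{con'}) and $\psi_n\in L^2\cap L^\infty$) and using $\widehat{R_0(-k_n^2)g}(\xi)=(|\xi|^2+k_n^2)^{-1}\widehat g(\xi)$, Plancherel gives for the orthogonal projection $P_n$ onto $\mathbb{C}\phi_{k_n}$
\[
 \|(I-P_n)\psi_n\|^2=\min_{c\in\mathbb{C}}\frac{1}{(2\pi)^3}\int\frac{|\widehat{f_n}(\xi)-c|^2}{(|\xi|^2+k_n^2)^2}\,d\xi\;\leq\; \frac{1}{(2\pi)^3}\int\frac{|\widehat{f_n}(\xi)-\widehat{f_n}(0)|^2}{(|\xi|^2+k_n^2)^2}\,d\xi .
\]
Once the right-hand side is shown to tend to $0$, the normalization $\|\psi_n\|=1$ forces $\|P_n\psi_n\|=|\langle\psi_n,\phi_{k_n}\rangle|\to1$; writing $\langle\psi_n,\phi_{k_n}\rangle=|a_n|e^{i\varphi_n}$ with $|a_n|\to1$ then gives $\psi_n=a_n\phi_{k_n}+(I-P_n)\psi_n=e^{i\varphi_n}\phi_{k_n}+o(1)$, which is (\ref{tbh}). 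The virtue of the variational bound is that it sidesteps having to pin down the optimal constant (morally $\widehat{f_n}(0)\approx e^{i\varphi_n}\sqrt{8\pi k_n}$) and with it any control on rates of convergence — which is the main conceptual hurdle, since the more naive strategies (comparing resonance functions pointwise, or rescaling by $k_n$) run into precisely such uncontrolled rates.

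Next I would bring in the total spreading through the two quantities
\[
 \varepsilon_n:=\lambda_n\int|v|\,|\psi_n|\,d^3x,\qquad B_n:=\lambda_n\int|x|\,|v|\,|\psi_n|\,d^3x ,
\]
and show $\varepsilon_n\to0$ and $B_n\to0$. For each fixed $R>0$ one splits the integral at $|x|=R$: the inner part is controlled by Cauchy--Schwarz, $\int_{|x|\le R}|v||\psi_n|\le\|v\|_2\bigl(\int_{|x|\le R}|\psi_n|^2\bigr)^{1/2}$ (and similarly with the weight $|x|$, using $|x|\,v\in L^2$ by (\ref{con'})), which tends to $0$ as $n\to\infty$ by total spreading; the outer part is the tail of a convergent integral ($\|v\|_{L^2(|x|>R)}$, resp. $\bigl(\int_{|x|>R}|x|^2|v|^2\bigr)^{1/2}$), hence small for $R$ large, uniformly in $n$. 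This is the step where the hypotheses are genuinely used, and where total spreading is indispensable — without it $\psi_n$ would converge to a true $L^2$ zero-energy eigenfunction and $\varepsilon_n$ would not go to $0$.

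Given $\varepsilon_n,B_n\to0$, the estimate of the remaining integral is elementary: from $|\widehat{f_n}(\xi)|\le\|f_n\|_1$ one gets $|\widehat{f_n}(\xi)-\widehat{f_n}(0)|\le2\varepsilon_n$, and from $|e^{-i\xi\cdot x}-1|\le|\xi||x|$ one gets $|\widehat{f_n}(\xi)-\widehat{f_n}(0)|\le|\xi|B_n$, so $|\widehat{f_n}(\xi)-\widehat{f_n}(0)|\le\min(2\varepsilon_n,\,|\xi|B_n)$. Splitting the $\xi$-integration at $|\xi|=\sigma_n:=2\varepsilon_n/B_n$ and using only the two trivial inequalities $\frac{|\xi|^4}{(|\xi|^2+k_n^2)^2}\le1$ (on $|\xi|\le\sigma_n$, where the numerator is $|\xi|^2B_n^2$) and $\frac{1}{(|\xi|^2+k_n^2)^2}\le|\xi|^{-4}$ (on $|\xi|>\sigma_n$, where the numerator is $4\varepsilon_n^2$) gives $\frac{1}{(2\pi)^3}\int\frac{|\widehat{f_n}(\xi)-\widehat{f_n}(0)|^2}{(|\xi|^2+k_n^2)^2}\,d\xi\le C\,\varepsilon_nB_n\to0$; notably $k_n$ drops out of this bound entirely. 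Combining with the first paragraph completes the proof. (An alternative, more operator-theoretic route — through the Birman--Schwinger operator $\lambda_n|v|^{1/2}R_0(-k_n^2)\,\mathrm{sgn}(v)|v|^{1/2}$, whose Hilbert--Schmidt convergence as $k_n\to0$ is guaranteed by (\ref{con'}), together with identification of the limiting zero-energy resonance function — also works, but requires more bookkeeping and the extra input that the threshold resonance is simple, which the momentum-space argument avoids.)
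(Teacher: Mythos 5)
Your proposal is correct, and while it shares the paper's skeleton --- both proofs start from $\psi_n=(-\Delta+k_n^2)^{-1}f_n$ with $f_n=-\lambda_n v\psi_n$ and both extract smallness from total spreading combined with the decay of $v$ in (\ref{con'}) --- the execution is genuinely different. The paper stays in position space: it renormalizes $\tilde\psi_n:=\psi_n/R_n^{1/2}$ with $R_n:=(\psi_n,(1+|x|^\delta)^{-1}\psi_n)\to0$, replaces the Yukawa kernel $e^{-k_n|x-x'|}/|x-x'|$ by $e^{-k_n|x|}/|x|$, and controls the replacement error by expanding the squared norm through the explicitly computable function $W(y)=2\pi e^{-|y|}$ and the bound $|W(y)-W(0)|\le2\pi|y|$, concluding $\|\tilde\psi_n-f_n\|=\mathcal{O}(1)$ and hence $\|\psi_n-R_n^{1/2}f_n\|=\mathcal{O}(R_n^{1/2})=o(1)$. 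You instead pass to momentum space and measure the distance from $\psi_n$ to the span of the Yukawa function by the variational bound with test constant $c=\widehat{f_n}(0)$; your inequality $|e^{-i\xi\cdot x}-1|\le|\xi|\,|x|$ is the Fourier-side counterpart of the paper's $|W(y)-W(0)|\le2\pi|y|$, and your quantities $\varepsilon_n,B_n$ play the role of the paper's $R_n$. What your route buys is that it dispenses with both the renormalization and the evaluation of $W$ in confocal elliptic coordinates, and the splitting of the $\xi$-integral at $\sigma_n=2\varepsilon_n/B_n$ --- which makes $k_n$ drop out of the final bound $C\varepsilon_nB_n$ --- is a clean, self-contained device; the paper's computation, on the other hand, is more elementary in that it never leaves $L^2(\mathbb{R}^3)$ in position space and exhibits the approximating rank-one vector explicitly. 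Two cosmetic points: you should note that $B_n>0$ for each $n$ (otherwise $v\psi_n=0$ a.e., hence $f_n=0$ and $\psi_n=0$, contradicting $\|\psi_n\|=1$), so that $\sigma_n$ is well defined; and $f_n\in L^2$ already follows from $v\psi_n=\lambda_n^{-1}(E_n+\Delta)\psi_n$ with $\psi_n\in D(H_0)$, so the appeal to $\psi_n\in L^\infty$ is not needed.
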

A few remarks are in order. If one takes for $\psi_n$ the ground states then the
sequence $\psi_n$ always totally spreads, see the discussion in \cite{klaus1,garcia}. In the spherically
symmetric potential
$s$--states always spread, and states, which have a non-zero angular momentum, do not spread \cite{klaus1}. (This can also be
seen from (\ref{tbh}), which tells that the wave function must approach the spherically symmetric form). Let us also note that $\psi_n$ does not spread
if $v(x) \geq |x|^{-2 + \epsilon}$ for $|x| \geq R_0$ and $\epsilon\in (0,1)$,
see \cite{garcia,bolle,bollenew}.

\begin{proof}[Proof of Theorem~\ref{th:1}]
Obviously, $R_n := (\psi_n , (1 + |x|^\delta)^{-1} \psi_n) \to 0 $ because $\psi_n$ totally
spreads.
The Schr\"odinger
equation in the integral form reads
\begin{equation}
 \tilde \psi_n = \frac {\lambda_n}{4\pi} \int d^3x'\;  \frac{e^{-k_n
|x-x'|}}{|x-x'|} v (x')\tilde \psi_n (x') ,
\end{equation}
where $\tilde \psi_n := \psi_n /R^{1/2}_n$ is the renormalized wave function . 
Let us set
\begin{equation}\label{f(n)}
f_n := \frac {\lambda_n}{4\pi} \frac{e^{-k_n |x|}}{|x|} \int d^3x'\;   v
(x')\tilde \psi_n (x') . 
\end{equation}
Our aim is to prove that $\| \tilde \psi_n - f_n \| = \mathcal{O}(1)$. The
direct calculation gives
\begin{eqnarray}
\| \tilde \psi_n - f_n \|^2 = \frac {\lambda^2_n}{16\pi^2} \int d^3 x d^3 x' d^3
x'' \left[\frac{e^{-k_n |x-x'|}}{|x-x'|} - \frac{e^{-k_n |x|}}{|x|} \right]
\nonumber\\
\times \left[\frac{e^{-k_n |x-x''|}}{|x-x''|} - \frac{e^{-k_n |x|}}{|x|} \right] v(x') v (x'') \tilde \psi^*_n (x') \tilde \psi_n (x'') . 
\end{eqnarray}
This can be transformed into
\begin{eqnarray}
\| \tilde \psi_n - f_n \|^2 = \frac {\lambda^2_n}{16\pi^2} \int d^3 x' d^3 x''
\frac 1k_n \Bigl\{ W(k_n (x'' -x')) + W(0)
\nonumber \\
- W(k_n x') - W(k_n x'')\Bigr\} v(x') v (x'') \tilde \psi^*_n (x') \tilde \psi_n (x'') \label{77},
\end{eqnarray}
where we defined
\begin{equation}\label{W(y)}
 W (y) := \int d^3z \; \frac{e^{-|z|} e^{-|z-y|}}{|z|\; |z-y|}  = 2\pi e^{-|y|} . 
\end{equation}
The integral in (\ref{W(y)}) can  be evaluated using the confocal elliptical
coordinates, see f. e. Appendix~9 in \cite{bransden}. Next, by the obvious
inequality $|W(y) -
W(0)| \leq 2\pi |y|$
\begin{eqnarray}
\| \tilde \psi_n - f_n \|^2 \leq \frac {\lambda^2_n }{8\pi}  \int d^3 x' d^3 x''
\bigl\{ |x'' -x'| + | x'| + | x''|\bigr\}  |v(x')| |v (x'')|  \nonumber \\
\times |\tilde \psi_n (x')|
|\tilde \psi_n (x'')|\leq \frac {\lambda^2_n }{2\pi}  \int d^3 x' d^3 x'' | x'|  |v(x')| |v (x'')|
|\tilde \psi_n (x')| |\tilde \psi_n (x'')| \label{xw1}. 
\end{eqnarray}
Inserting into the rhs of (\ref{xw1}) the identities $1 = (1+ |x'|^\delta)^{1/2} (1+
|x'|^\delta)^{-1/2} $ and  the same for $x''$ and applying the Cauchy--Schwarz
inequality gives
\begin{equation}
\| \tilde \psi_n - f_n \|^2 \leq  \frac {\lambda^2_n \gamma}{2\pi}  ,
\end{equation}
where $\gamma$ is defined in (\ref{con'}).
Thus $\| \tilde \psi_n - f_n \| = \mathcal{O}(1)$ and by (\ref{f(n)}) we have
\begin{equation}
 \psi_n = \frac{\lambda_n}{4\pi} R^{1/2}_n d_n \frac{e^{-k_n |x|}}{|x|} +
\hbox{o}(1),
\end{equation}
where $d_n := \int d^3x'\;   v (x')\tilde \psi_n (x')$ and $\hbox{o}(1)$ denotes
the terms that go to zero in norm. Using that $\|\psi_n\| = 1$
we recover the statement of the theorem. 
 \end{proof}

\section{The Three--Particle Case}\label{sec:3}
We shall consider the Hamiltonian (\ref{hami}). Let $m_i$ and
$r_i \in \mathbb{R}^3$ denote particles masses and position vectors. The reduced
masses we shall denote as $\mu_{ik} := m_i m_k / (m_i + m_k)$.
The pair--interactions $v_{ik}$ are operators of multiplication
by real $V_{ik} (r_i - r_k)$. We shall make the following assumptions

\begin{list}{R\arabic{foo}}
{\usecounter{foo}
    \setlength{\rightmargin}{\leftmargin}}
\item
The pair potentials satisfy the following requirement
\begin{equation}
\fl \gamma_0 := \max_{i=1,2}\max\left[ \int d^3 r \bigl| V_{i3} (r)\bigr|^2 , \int
d^3 r \bigl| V_{i3} (r)\bigr| (1+|r|)^{2\delta}\right] < \infty , \label{restr}
\end{equation}
where $0 < \delta < 1/8$ is a fixed constant. And
\begin{equation}\label{restr3}
 -b_1 e^{-b_2 |r|} \leq V_{12} (r) \leq 0 ,
\end{equation}
where $b_{1,2} >0$ are some constants.
\item
There is a converging sequence of coupling constants $\lambda_n >0$, 
 $\lim_{n\to \infty} \lambda_n =
\lambda_{cr} >0$  such that
$H(\lambda_n) \psi_n = E_n \psi_n$, where $\psi_n \in D(H_0)$, $\|\psi_n\| =1$,
$E_n < 0$, $\lim_{n \to \infty} E_n = 0$.
\item
The Hamiltonian $H_0 + v_{12}$ is at critical coupling (For the definition of critical coupling see \cite{2}).
The Hamiltonians $H_0 + \lambda v_{13}$ and $H_0 + \lambda v_{23}$ are positive
and are not at critical coupling for $\lambda = \lambda_n , \lambda_{cr}$.
\end{list}

Again, let us stress that given that R1 is satisfied one can always tune the coupling constants so that R2, R3 would be satisfied with $\psi_n$ being ground states. 
Besides, the sequence $\psi_n$ in this case would totally spread, this is discussed in detail in Sec.~6 in \cite{1}. 

In the Jacobi coordinates $x := [\sqrt{2 \mu_{12}}/\hbar](r_2 - r_1)$ and $y :=
[\sqrt{2 M_{12}}/\hbar](r_3
- m_1/(m_1+m_2) r_1 - m_2/(m_1+m_2) r_2)$, where
 $M_{ij} = (m_i + m_j)m_k / (m_1 + m_2 + m_3)$ ($\{i,j,k\}$ is a permutation of $\{1,2,3\}$) the kinetic
energy operator takes the form \cite{1,2}
\begin{equation}\label{ay4}
    H_0 = - \Delta_x- \Delta_y .
\end{equation}
In the following $\chi_\Omega : \mathbb{R} \rightarrow \mathbb{R} $
denotes
the characteristic function of the interval $\Omega \subset \mathbb{R} $ (for instance, $\chi_{[1, \infty)} (x)$ is equal to one if $x \in [1, \infty)$ and is zero otherwise). The
next theorem is the analog of Theorem~\ref{th:1} for the three--particle case.
\begin{theorem}\label{th:2}
Suppose $H(\lambda)$ defined in (\ref{hami}) satisfies R1--3. If $\psi_n$
totally spreads then
\begin{equation}\label{3bh3}
 \left\| \psi_n -  \frac{e^{i\varphi_n}\chi_{[1,\infty)}(\rho)}{2\pi^{3/2} |\ln
k_n|^{1/2}} \frac{ \bigl\{ |x|\sin(k_n |y|) + |y|\cos(k_n |y|)\bigr\}e^{-k_n|x|}
}{|x|^3|y|+|y|^3 |x|} \right\| \to 0 , 
\end{equation}
where $\varphi_n \in [0, 2\pi)$ are phases, $\rho := \sqrt{|x|^2 + |y|^2}$ and
$k_n
:= \sqrt{|E_n|}$.
\end{theorem}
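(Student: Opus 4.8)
The plan is to mimic the two--particle argument of Theorem~\ref{th:1}, but now the "free" resolvent that appears after writing the Schr\"odinger equation in integral form must be built around the operator $H_0 + v_{12}$, which is at critical coupling, rather than around $H_0$ alone. First I would write $\psi_n = -(H_0 + v_{12} - E_n)^{-1}\lambda_n (v_{13}+v_{23})\psi_n$ and renormalize by $R_n^{1/2}$, where $R_n := (\psi_n, (1+|x|^\delta)^{-1}(1+|y|^\delta)^{-1}\psi_n)$ or a similar weight that tends to $0$ by total spreading; set $\tilde\psi_n := \psi_n/R_n^{1/2}$. The key analytic input is an explicit (or asymptotically explicit) formula for the integral kernel of $(H_0 + v_{12} - E_n)^{-1}$ when $\{1,2\}$ has a zero--energy resonance: in the variable $x$ the resonance produces the Jost--type solution $\sim 1/|x|$, while in the variable $y$ one has the free three--dimensional resolvent with spectral parameter $E_n$, i.e.\ a Yukawa kernel $e^{-k_n|y-y'|}/|y-y'|$. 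Convolving these and extracting the leading behaviour near the resonance is what produces the trial function in (\ref{3bh3}): the factor $\{|x|\sin(k_n|y|)+|y|\cos(k_n|y|)\}e^{-k_n|x|}/(|x|^3|y|+|y|^3|x|)$ is precisely the properly normalized zero--energy--channel function of $H_0+v_{12}$ at energy $E_n$, and the logarithmic factor $|\ln k_n|^{1/2}$ is the normalization constant forced by $\|\psi_n\|=1$ together with the slowly divergent integral $\int \rho^{-?}\,d\rho$ over the region $\rho\gtrsim 1$.

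Concretely I would proceed in the following steps. \emph{Step 1:} Establish the needed resolvent estimates for $h_{12}(\lambda_{cr}) := H_0 + v_{12}$ at critical coupling, adapting the low--energy expansion from \cite{klaus1,klaus2} (and the three--body versions in \cite{yafaev,sobol,1,2}); in particular identify the resonance function $\phi_0(x)$ with $\phi_0 \sim c/|x|$ at infinity and control the remainder in the kernel of $(h_{12}(\lambda_{cr}) - E_n - \Delta_y)^{-1}$ uniformly as $E_n\to 0$. \emph{Step 2:} Using R3 (that $H_0+\lambda v_{13}$ and $H_0+\lambda v_{23}$ are positive and \emph{not} at critical coupling), bound the ``cross'' resolvents $(H_0 + \lambda_n v_{i3} - E_n)^{-1}$ uniformly in $n$; this plays the role played in Theorem~\ref{th:1} by the trivial boundedness of the free resolvent times the potential. \emph{Step 3:} Insert $1 = (1+|x|^\delta)^{1/2}(1+|x|^\delta)^{-1/2}$ (and the analogue in $y$) against the potentials $v_{13},v_{23}$, which decay by R1, and apply Cauchy--Schwarz exactly as in (\ref{xw1})--(\ref{con'}); the hypothesis $\delta < 1/8$ is there to make the resulting weighted integrals of $|V_{i3}|$ and of the resolvent kernels finite. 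This shows $\|\tilde\psi_n - g_n\| = \mathcal{O}(1)$ for an explicit $g_n$ built from the resonance channel of $h_{12}$. \emph{Step 4:} Compute the leading large--$\rho$ asymptotics of $g_n$, recognize the closed form in (\ref{3bh3}), and fix the constant and the phase $\varphi_n$ from $\|\psi_n\| = 1$; this is where the $|\ln k_n|^{1/2}$ emerges and where one uses total spreading to discard the contribution of the region $\rho \leq 1$ (hence the cutoff $\chi_{[1,\infty)}(\rho)$).

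The main obstacle will be \emph{Step 1} together with the $y$--direction normalization: one must control the kernel of $(h_{12}(\lambda_{cr}) - \Delta_y - E_n)^{-1}$ with enough uniformity to (a) extract the single slowly--decaying channel (the product of the $x$--resonance function with the Yukawa factor in $y$) and (b) bound the orthogonal complement of that channel uniformly in $E_n$, all while the overall $L^2$ norm of the extracted channel diverges only logarithmically. Keeping track of this logarithm and showing that the error terms are genuinely $\mathrm{o}(|\ln k_n|^{1/2})$, not merely $\mathcal{O}(1)$ relative to an $\mathcal{O}(1)$ quantity, is the delicate point; it is exactly the three--body analogue of the low--energy resolvent expansions of Klaus--Simon and Yafaev cited in the introduction, and the bookkeeping with the exponents $\delta$ and the Birman--Schwinger--type operators $v_{13}^{1/2}(H_0 + v_{12} - E_n)^{-1} v_{23}^{1/2}$ is where essentially all the work lies.
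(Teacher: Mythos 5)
Your overall skeleton does coincide with the paper's: write the eigenvalue equation in resolvent form around the resonant pair $\{1,2\}$, use a Klaus--Simon/Yafaev type low--energy expansion of $\sqrt{|v_{12}|}\bigl(-\Delta_x+p_y^2+k_n^2\bigr)^{-1}\sqrt{|v_{12}|}$ (equivalently of $(H_0+v_{12}+k_n^2)^{-1}$ after the partial Fourier transform in $y$) to extract the single singular channel $\sim \mathbb{P}_0/(a\sqrt{p_y^2+k_n^2})$ spanned by the resonance function $\phi_0$, and let $\|\psi_n\|=1$ produce the $|\ln k_n|^{1/2}$ normalization; the paper indeed proves the equivalent Fourier--space statement (\ref{3bh}) first and then inverts the transform.

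The genuine gap is in your Steps 2--3, which replace the heart of the argument by ``uniform boundedness of the cross resolvents plus weighted Cauchy--Schwarz.'' Uniform bounds only give errors of size $\mathcal{O}(1)$, whereas here every error must vanish in $L^2$ norm in absolute terms: unlike Theorem~\ref{th:1}, there is no renormalized quantity of diverging norm against which an $\mathcal{O}(1)$ error is negligible (the coefficient $\hat g_n(0)$ of the surviving channel is itself $o(1)$ and is compensated by the $\sqrt{|\ln k_n|}$ norm of the channel function). Two independent $o(1)$ inputs are needed and are missing from your plan. First, the removal of $[H_0+k_n^2]^{-1}v_{i3}\psi_n$ (Lemma~\ref{lem:1}) rests on the no--clustering theorem of \cite{2} ($\bigl\||v_{ik}|^{1/2}|\psi_n|\bigr\|\to 0$, a consequence of total spreading), on iterated integral equations, and on the positivity--preserving operators $Q_n$, whose uniform boundedness is where R3 actually enters -- not a bound on $(H_0+\lambda_n v_{i3}-E_n)^{-1}$ itself. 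Second, and decisively, the universal shape in (\ref{3bh3}) requires the effective source in the resonant channel, $g_n(y)=\int d^3x\,\phi_0\,\Phi_n$, to satisfy $\int(1+|y|^\delta)|g_n|\,d^3y\to 0$ (Lemmas~\ref{lem:3}--\ref{lem:5}), so that $\hat g_n(p_y)$ may be frozen at its value at $p_y=0$; this is proved with the weighted momentum--space operators $\tilde B_{13}(k_n)$ (weight $|p_\zeta|^{1-\delta}+k_n^{1-\delta}$) and uses quantitatively that the pairs $\{1,3\},\{2,3\}$ have no zero--energy resonance. Your proposal never addresses this collapse of the $p_y$--dependence -- it is tacitly assumed when you ``recognize the closed form'' -- and the paper's closing remark shows it is not automatic: in the Efimov situation the same channel extraction for the pair $\{1,2\}$ is available, yet the analogue of (\ref{3bh3}) fails. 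Finally, the two--body renormalization by $R_n^{1/2}$ does not transplant to three bodies; in the paper the smallness comes from no--clustering and from $|\hat g_n(0)|\to 0$, not from a weighted expectation value of $\psi_n$.
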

\begin{remark}
Theorem~\ref{th:2} shows that similar to the two--particle case total spreading in the considered three--body case is possible only for states with zero angular momentum 
(irrespectively of the values of particles' masses). This can be seen 
from (\ref{3bh3}), where in the limit the wave function depends only on $|x|,|y| $ and is thus invariant under rotations of $x,y$. This fact is not that trivial as it may seem. 
In Theorem~\ref{th:2} we consider only the situation when a single particle pair has a zero energy resonance. If two particle pairs have such resonances, one arrives at the Efimov effect, see 
\cite{8.efimov,yafaev,sobol}, where there exists an infinite sequence of energy levels $E_n \to 0$ with orthonormal wave functions $\phi_n$. This sequence of wave functions also totally spreads, 
see \cite{2}. However, in the case of Efimov effect it is possible to choose mass ratios in the system in such a way that the sequence 
$\phi_n$ would have a non--zero angular momentum, see \cite{hwhammer,efimnucl,sobol}.  
\end{remark}

Theorem~\ref{th:2} has a useful practical corollary.
\begin{theorem}\label{th:3}
 Suppose $H(\lambda)$ satisfies R1--3. If $\psi_n$
totally spreads then the angular probability distribution $\mathcal{D}_n (\theta , \hat x , \hat y)$
defined in (\ref{xw2}) converges in measure to
$\mathcal{D}_\infty (\theta , \hat x , \hat y) = (4\pi^3)^{-1} \sin^2 \theta$.
\end{theorem}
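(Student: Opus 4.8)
\medskip
\noindent\emph{Proof proposal for Theorem~\ref{th:3}.}
The idea is to transfer the $L^2$--statement of Theorem~\ref{th:2} to the marginal $\mathcal{D}_n$ and then to evaluate the limit in closed form. Write $\Phi_n$ for the comparison function standing on the right of (\ref{3bh3}); by Theorem~\ref{th:2} we have $\|\psi_n - \Phi_n\| \to 0$, and since $\|\psi_n\| = 1$ also $\|\Phi_n\| \to 1$. In hyperspherical coordinates $d^3x\,d^3y = \rho^5\cos^2\theta\sin^2\theta\,d\rho\,d\theta\,d\Omega_x\,d\Omega_y$, so by definition (\ref{xw2}) $\mathcal{D}_n$ is exactly the density, with respect to $d\theta\,d\Omega_x\,d\Omega_y$, of the push--forward of the probability measure $|\psi_n|^2\,d^3x\,d^3y$ under $(x,y)\mapsto(\theta,\hat x,\hat y)$. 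Let $\mathcal{D}_n^{(0)}$ be the corresponding marginal obtained from $|\Phi_n|^2$. Integrating the pointwise bound $\bigl|\,|\psi_n|^2 - |\Phi_n|^2\,\bigr| \le \bigl|\,|\psi_n| - |\Phi_n|\,\bigr|\,\bigl(|\psi_n| + |\Phi_n|\bigr)$ over all coordinates and using Cauchy--Schwarz gives
\[
 \int_0^{\pi/2}\! d\theta\int d\Omega_x\int d\Omega_y\;\bigl|\mathcal{D}_n - \mathcal{D}_n^{(0)}\bigr| \;\le\; \bigl\|\,|\psi_n| - |\Phi_n|\,\bigr\|\,\bigl(\|\psi_n\| + \|\Phi_n\|\bigr) \;\le\; \|\psi_n - \Phi_n\|\,\bigl(1 + \|\Phi_n\|\bigr) \;\longrightarrow\; 0 .
\]
Hence $\mathcal{D}_n - \mathcal{D}_n^{(0)} \to 0$ in $L^1$, and a fortiori in measure, on the finite--measure space $[0,\pi/2]\times S^2\times S^2$, so it suffices to identify the limit of $\mathcal{D}_n^{(0)}$.

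This is a direct computation. Because $|\Phi_n|^2$ depends only on $|x| = \rho\cos\theta$ and $|y| = \rho\sin\theta$ (the phase $\varphi_n$ drops out) while the denominator of $\Phi_n$ equals $|x|\,|y|\,\rho^2$, the substitution $t = k_n\rho$ gives, with $c = \cos\theta$, $s = \sin\theta$,
\[
 \mathcal{D}_n^{(0)}(\theta,\hat x,\hat y) = \frac{1}{4\pi^3\,|\ln k_n|}\int_{k_n}^{\infty}\frac{g_\theta(t)}{t}\,dt , \qquad g_\theta(t) := \bigl[\,c\sin(ts) + s\cos(ts)\,\bigr]^2 e^{-2tc} ,
\]
so in particular $\mathcal{D}_n^{(0)}$ does not depend on $\hat x,\hat y$. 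Here $g_\theta$ is smooth, $g_\theta(0) = s^2 = \sin^2\theta$, and for $\theta\in(0,\pi/2)$ the factor $e^{-2tc}$ makes the integral converge at infinity. Fixing such $\theta$, I would split at $t=1$: the tail $\int_1^\infty g_\theta(t)\,t^{-1}\,dt$ is a finite, $\theta$--dependent constant, while $\int_{k_n}^1 g_\theta(t)\,t^{-1}\,dt = g_\theta(0)\,|\ln k_n| + \int_{k_n}^1\bigl(g_\theta(t)-g_\theta(0)\bigr)t^{-1}\,dt$, and the last integral stays bounded as $k_n\to 0$ since $|g_\theta(t)-g_\theta(0)| \le C_\theta\,t$ on $[0,1]$. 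Dividing by $4\pi^3|\ln k_n|$ therefore yields $\mathcal{D}_n^{(0)}(\theta,\hat x,\hat y) \to (4\pi^3)^{-1}\sin^2\theta$ for every $\theta\in(0,\pi/2)$ and all $\hat x,\hat y$, i.e.\ pointwise a.e.; on a finite--measure space this gives convergence in measure. Combining with the first paragraph and the stability of convergence in measure under sums finishes the proof.

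Given Theorem~\ref{th:2}, the argument is short and there is no real obstacle; the only point deserving attention --- and the reason the statement is ``in measure'' rather than pointwise or uniform --- is the behaviour near the endpoints. At $\theta = \pi/2$ the $\rho$--integral actually diverges, and both $C_\theta$ and the tail constant blow up as $\theta\to 0^+$ or $\theta\to(\pi/2)^-$. This is harmless: on every $[\eta,\pi/2-\eta]$ the convergence above is uniform, and the two excised endpoint neighbourhoods have arbitrarily small measure. (Since moreover $\int\mathcal{D}_n^{(0)} = \|\Phi_n\|^2 \to 1 = \int(4\pi^3)^{-1}\sin^2\theta$, Scheff\'e's lemma would even upgrade the convergence of $\mathcal{D}_n^{(0)}$, hence of $\mathcal{D}_n$, to $L^1$ --- but convergence in measure is all that is asserted.)
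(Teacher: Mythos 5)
Your proof is correct and follows essentially the same route as the paper's: both replace $\psi_n$ by the explicit comparison function of Theorem~\ref{th:2}, bound the $L^1$--distance of the corresponding angular distributions by the $L^2$--distance of the wave functions via Cauchy--Schwarz, and then evaluate the remaining one--dimensional integral asymptotically as $k_n \to 0$ to get $(4\pi^3)^{-1}\sin^2\theta$. Your splitting of the integral at $t=1$ together with the Lipschitz bound on $g_\theta$ is just a more explicit justification of the step the paper summarizes as ``expanding around $t=0$,'' and your passage to convergence in measure (pointwise limit plus $L^1$ control on a finite measure space) is equivalent to the paper's appeal to Vitali.
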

\begin{proof}
 Let us rewrite (\ref{3bh3}) in hyperspherical coordinates
\begin{equation}\label{3bh4}
 \left\| \psi_n - \Theta_n \right\| \to 0 ,
\end{equation}
where
\begin{equation}\label{17:23:1}
\Theta_n:= \frac{e^{i\varphi_n} \chi_{[1,\infty)}(\rho)}{2\pi^{3/2}
|\ln k_n|^{1/2}} \frac{e^{-k_n\rho\cos\theta} \sin(\theta + k_n \rho
\sin\theta)}{\rho^3 \cos\theta \sin\theta} . 
\end{equation}
If we denote by $\mathcal{D}^{\Theta}_n (\theta, \hat x , \hat y)$ the angular
probability distribution given by $\Theta_n$ then the limiting angular probability distribution is
\begin{eqnarray}
\fl \mathcal{D}_\infty (\theta, \hat x , \hat y)
= \lim_{n\to \infty} \mathcal{D}^{\Theta}_n = \frac{1}{4\pi^3}\lim_{n \to \infty} \frac{1}{|\ln k_n|}\int_1^\infty
\frac{e^{-2k_n\rho\cos\theta}}{\rho} \sin^2 (\theta + k_n \rho \sin \theta) \, d\rho , 
\end{eqnarray}
where the limit is pointwise. 
Changing the integration variable in the last integral for $t = k_n \rho \sin
\theta$ and expanding around $t=0$ we obtain
\begin{eqnarray}
\fl \mathcal{D}_\infty (\theta, \hat x , \hat y)  = \frac{1}{4\pi^3} \lim_{n \to
\infty}\frac{1}{|\ln k_n|}\int_{k_n \sin
\theta}^\infty \frac{e^{-2t\cot\theta}}{t} \sin^2 (\theta + t) \, dt  =
\frac 1{4\pi^3}\sin^2 \theta . 
\end{eqnarray}
Note that $\mathcal{D}^\Theta_n \to \mathcal{D}_\infty$ pointwise and uniformly. Now we show that $\| \mathcal{D}_n - \mathcal{D}^{\Theta}_n \|_1  \to 0$. To
make the notation shorter we set
$d \tilde \Omega := \cos^2\theta \sin^2 \theta d\theta d\Omega_x d \Omega_y$.
\begin{eqnarray}
 \| \mathcal{D}_n - \mathcal{D}^{\Theta}_n \|_1 \equiv \int_0^{\pi/2} d\theta
\int d\Omega_x d\Omega_y \bigl| \mathcal{D}_n - \mathcal{D}^{\Theta}_n \bigr| \nonumber\\
\fl  =
\int d\tilde \Omega \Bigl| \int |\psi_n|^2 \rho^5 d\rho - \int |\Theta_n |^2
\rho^5 d\rho\Bigr| 
\leq \int d\tilde \Omega \int \rho^5 \Bigl| |\psi_n| - |\Theta_n| \Bigr|
\Bigl(  |\psi_n |   + |\Theta_n |  \Bigr) \,  d\rho  \nonumber \\
\leq \bigl\|\psi_n - \Theta_n \bigr\| \left( \int d\tilde \Omega \int \rho^5
  \bigl( |\psi_n |   + |\Theta_n | \bigr)^2 \,  d\rho \right)^{1/2} \leq 2
\bigl\|\psi_n - \Theta_n \bigr \| , \label{14:12:1}
\end{eqnarray}
where we applied twice the Cauchy--Schwarz inequality and $\bigl| |a| - |b|
\bigr| \leq |a-b|$ for any $a,b \in \mathbb{C}$. Therefore,
$\| \mathcal{D}_n - \mathcal{D}_\infty \|_1 \to 0$. By the Vitali convergence
theorem \cite{bartle} this is equivalent to the statement of Theorem~\ref{th:3}.
 \end{proof}
Here we would like to make the following two remarks.  
\begin{remark}
If instead of Jacobi coordinates one would express the limiting angular
probability distribution in $r_{13} :=r_3-r_1$ and
$r_{23} := r_3 - r_2$, which are also ``natural'' coordinates for the considered
problem, then
it would depend not only on the ratio $|r_{13}|/|r_{23}|$ but also on the angle
between these vectors. Let us also note that if the pair of particles $\{1,2\}$
would
be marginally bound with the energy $E_{12}$ and the sequence of ground states
$\psi_n$ would be such that $E_n < E_{12}$, $E_n \to E_{12}$ then $\psi_n $
totally spreads,
see \cite{klaus2}. However, in this case it is easy to show that the angular
probability distribution approaches the delta--distribution.
\end{remark}
\begin{remark}
Theorem~\ref{th:3} states that the angular probability distribution converges in measure, which is equivalent to $\|\mathcal{D}_n - \mathcal{D}_\infty\|_1 \to 0$, whereby 
$\mathcal{D}_n$ corresponds to $\psi_n$ in R2 and the meaning of $\|\cdot\|_1$ is explained in (\ref{14:12:1}). It should be stressed here that 
$\mathcal{D}_n$ does not converge to $\mathcal{D}_\infty$ pointwise everywhere. Indeed, for smooth interactions one expects that $\psi_n$ at $|x| = 0$ should be finite; by definition 
(\ref{xw2}) this immediately implies that $\mathcal{D}_n = 0$ if $\theta = \pi/2$. Nevertheless, the limiting angular probability distribution $\mathcal{D}_\infty$ 
is not zero at $\theta = \pi/2$, on the contrary, it has its maximum at this point! At $\theta \neq \pi/2$ one should expect a pointwise convergence. 
\end{remark}

\begin{lemma}\label{lem:1}
 Suppose $H(\lambda)$ defined in (\ref{hami}) satisfies R1--3. If $\psi_n$
totally spreads then
\begin{equation}
 \psi_n = \bigl[H_0 + k_n^2\bigr]^{-1} | v_{12} | \psi_n  + \hbox{o}(1),
\end{equation}
where $\hbox{o}(1)$ denotes the terms that go to zero
in norm.
\end{lemma}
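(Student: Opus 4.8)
The plan is to write the Schrödinger equation $H(\lambda_n)\psi_n = E_n\psi_n$ in the resolvent form and show that the contribution of the "non-resonant" pair interactions $v_{13}+v_{23}$ vanishes in the limit, while the sign of $v_{12}$ is fixed by R1. Since $H_0 + v_{12} + \lambda_n(v_{13}+v_{23}) + k_n^2$ annihilates $\psi_n$, I would rearrange to obtain $(H_0 + k_n^2)\psi_n = -v_{12}\psi_n - \lambda_n(v_{13}+v_{23})\psi_n = |v_{12}|\psi_n - \lambda_n(v_{13}+v_{23})\psi_n$, using $V_{12}\le 0$ from (\ref{restr3}). Applying $[H_0 + k_n^2]^{-1}$ (bounded since $H_0 \ge 0$ and $k_n^2 > 0$) gives
\begin{equation}
 \psi_n = [H_0+k_n^2]^{-1}|v_{12}|\psi_n - \lambda_n[H_0+k_n^2]^{-1}(v_{13}+v_{23})\psi_n .
\end{equation}
So everything reduces to proving $\bigl\| [H_0+k_n^2]^{-1}(v_{13}+v_{23})\psi_n \bigr\| \to 0$.

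For this estimate I would exploit the total spreading of $\psi_n$ together with R1 and R3. The natural route is a Birman–Schwinger / Combes–Thomas type bound: write $[H_0+k_n^2]^{-1}v_{i3}\psi_n = [H_0+k_n^2]^{-1}|V_{i3}|^{1/2}\,\mathrm{sgn}(V_{i3})|V_{i3}|^{1/2}\psi_n$ and bound the operator $|V_{i3}|^{1/2}[H_0+k_n^2]^{-1}|V_{i3}|^{1/2}$ (this is a Birman–Schwinger operator for the pair $\{i,3\}$, which under R3 has norm uniformly below $1$, since $H_0+\lambda v_{i3}$ is positive and not at critical coupling). Then it remains to control $\bigl\| |V_{i3}|^{1/2}\psi_n\bigr\|$. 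Here I would split $|V_{i3}|$ against the characteristic functions $\chi_{[0,R]}$ and $\chi_{[R,\infty)}$ of $|r_i - r_3|$ (equivalently of the appropriate Jacobi-type variable): on the far region the integrability hypothesis $\int |V_{i3}|(1+|r|)^{2\delta} < \infty$ in (\ref{restr}) together with $\|\psi_n\|=1$ and a Sobolev/Kato-type bound $\| |V_{i3}|^{1/2}(H_0+1)^{-1/2}\| \le C\gamma_0^{1/2}$ makes the tail contribution small uniformly in $n$; on the near (bounded) region, one uses that $\psi_n$ totally spreads, so $\int_{|\zeta|\le R}|\psi_n|^2 \to 0$, and again a Kato bound to pass from $|V_{i3}|^{1/2}$ times the localized piece to something controlled by $(H_0+1)^{1/2}\psi_n$ localized — or more simply, rewrite $|v_{i3}|^{1/2}\chi_{[0,R]}$ as a bounded operator times a spatial cutoff and conclude from total spreading. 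Summing the $i=1,2$ contributions gives the claim.

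The main obstacle I expect is the near-region estimate: controlling $\| |V_{i3}|^{1/2}\chi_{[0,R]}(\,\cdot\,)\psi_n\|$, because $|V_{i3}|$ depends on $r_i - r_3$ while total spreading is phrased in terms of the global hyperradius $\rho = \sqrt{|x|^2+|y|^2}$, and the two-particle subsystem $\{i,3\}$ can be "close" while the full system is "spread" in another channel. The resolution is that the region $\{|r_i-r_3|\le R\}$ is not bounded in the three-body configuration space, so total spreading alone is not enough — one genuinely needs R3 (the $\{i,3\}$ pair has no zero-energy resonance, so its Birman–Schwinger operator is a uniform contraction) to absorb the contribution of that unbounded near-region. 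I would therefore organize the bound so that the full strength $\lambda_n |V_{i3}|^{1/2}[H_0+k_n^2]^{-1}|V_{i3}|^{1/2}$ is estimated as one contractive block (uniformly in $n$ by a Combes–Thomas argument showing the relevant operator norm stays $\le \theta < 1$), leaving only $\||V_{i3}|^{1/2}\psi_n\|\le C\gamma_0^{1/2}$ bounded; then the actual smallness comes from a more careful splitting where the potential tail and a genuinely bounded region (in $\rho$) are treated separately and the total-spreading hypothesis is applied only on the latter. This is essentially the three-body analogue of the mechanism used in \cite{1,2}, and I would lean on those estimates.
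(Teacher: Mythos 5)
Your opening reduction coincides with the paper's: rearranging the Schr\"odinger equation gives exactly (\ref{ihami}), and the lemma reduces to proving $\lambda_n\bigl[H_0+k_n^2\bigr]^{-1}v_{i3}\psi_n\to 0$ in norm for $i=1,2$, i.e.\ (\ref{kost1})--(\ref{kost2}). The gap is in how you propose to prove that. Your factorization $\bigl[H_0+k_n^2\bigr]^{-1}|V_{i3}|^{1/2}\cdot\mathrm{sgn}(V_{i3})\,|V_{i3}|^{1/2}\psi_n$ leaves the operator $\bigl[H_0+k_n^2\bigr]^{-1}|V_{i3}|^{1/2}$ standing alone, and this operator is \emph{not} uniformly bounded: after the partial Fourier transform in the $\{i,3\}$-channel Jacobi coordinate, the estimate (\ref{xw7}) with $\tilde t_n\equiv 1$ gives a norm of order $k_n^{-1/2}\to\infty$, coming from the region $|p_\zeta|\lesssim k_n$. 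The uniformly contractive Birman--Schwinger block $|V_{i3}|^{1/2}\bigl[H_0+k_n^2\bigr]^{-1}|V_{i3}|^{1/2}$ you invoke simply never appears in that factorization, so neither boundedness nor even (rate-free) convergence to zero of $\bigl\||V_{i3}|^{1/2}\psi_n\bigr\|$ closes the estimate. The paper's mechanism is different: it iterates the resolvent identity once more, substituting (\ref{ihami2}) into (\ref{kost1}) so that every term contains \emph{two} resolvents sandwiched between potentials, and then inserts the Fourier-multiplier weight $\tilde B_{13}(k_n)$ of (\ref{b13})--(\ref{tail}), which redistributes the $|p_\zeta|\to 0$ singularity so that both resulting factors --- $\mathcal{L}_n=\bigl[H_0+k_n^2\bigr]^{-1}|v_{13}|^{1/2}\tilde B_{13}(k_n)$ on the left and the operators $\mathcal{M}^{(1,2)}_n$ containing $\tilde B^{-1}_{13}(k_n)$ on the right --- are uniformly norm-bounded under (\ref{restr}). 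Nothing in your outline plays this role.

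The second missing piece is the actual source of smallness. You correctly observe that total spreading in the hyperradius cannot control $\bigl\||V_{i3}|^{1/2}\chi_{\{|r_i-r_3|\le R\}}\psi_n\bigr\|$ because that region is unbounded in configuration space, but your proposed fix (split off the potential tail and apply total spreading on ``a genuinely bounded region in $\rho$'') does not repair this: the set $\{|r_i-r_3|\le R,\ \rho\ge R'\}$ is still unbounded and the potential is not small there. The paper does not derive the smallness from total spreading by spatial cutoffs at all; it imports the no-clustering theorem of \cite{2}, which asserts $\bigl\||v_{ik}|^{1/2}|\psi_n|\bigr\|\to 0$ for every pair, and applies it --- after the resolvent iteration --- to the factors $|v_{12}|^{1/2}|\psi_n|$ and $|v_{23}|^{1/2}|\psi_n|$ at the right end of each term, with everything to their left uniformly bounded. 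Your contraction/Combes--Thomas intuition does match one genuine ingredient, namely the uniform boundedness of $Q_n$ in (\ref{gemor}) used to treat the $(v_{13})_-\psi_n$ contribution, but that is an auxiliary step, not the mechanism producing the $\hbox{o}(1)$. As written, the proposal would not close.
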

\begin{proof}
Rearranging in different ways the terms in the Schr\"odinger equation for $\psi_n$ we derive three
 integral equations, see \cite{2}
\begin{eqnarray}
 \fl \psi_n = \bigl[H_0 + k_n^2\bigr]^{-1} \Bigl( |v_{12}| - \lambda_n v_{13} -
\lambda_n v_{23} \Bigr)\psi_n \label{ihami} , \\
 \fl \psi_n = \bigl[ H_0 + \lambda_n (v_{13})_+ + \lambda_n (v_{23})_+  +
k_n^2\bigr]^{-1} \Bigl( |v_{12}| + \lambda_n (v_{13})_- 
+ \lambda_n (v_{23})_-\Bigr)\psi_n  \label{ihami2} ,  \\
\fl \psi_n = \bigl[ H_0 + \lambda_n (v_{13})_+ + k_n^2\bigr]^{-1} \Bigl( |v_{12}| +
\lambda_n (v_{13})_- - \lambda_n v_{23}  \Bigr)\psi_n  , \label{ihami3}
\end{eqnarray}
where $(v_{ik})_\pm = \max [0, \pm v_{ik}]$.
By (\ref{ihami}) the Lemma would be proved if we can show that
\begin{eqnarray}
 F_n :=  \lambda_n \bigl[H_0 + k_n^2\bigr]^{-1} v_{13} \psi_n =  \hbox{o}(1) , 
\label{kost1}\\
\lambda_n \bigl[H_0 + k_n^2\bigr]^{-1} v_{23} \psi_n =  \hbox{o}(1) . 
\label{kost2}
\end{eqnarray}
Below we prove (\ref{kost1}), eq.~(\ref{kost2}) is proved analogously.
Substituting (\ref{ihami2}) into (\ref{kost1}) we split $F_n$ in three parts
\begin{equation}\label{xwsplit}
 F_n = \sum_{i=1}^3 F^{(i)}_n , 
\end{equation}
where
\begin{eqnarray}
 \fl F^{(1)}_n = \bigl[ H_0 + k_n^2\bigr]^{-1} v_{13} \bigl[ H_0 + \lambda_n
(v_{13})_+ + \lambda_n (v_{23})_+  + k_n^2\bigr]^{-1} |v_{12}|\psi_n  , \label{xwF1n}\\
\fl F^{(2)}_n = \lambda_n \bigl[ H_0 + k_n^2\bigr]^{-1} v_{13} \bigl[ H_0 + \lambda_n
(v_{13})_+ + \lambda_n (v_{23})_+  + k_n^2\bigr]^{-1} (v_{23})_- \psi_n  ,  \label{xwF2n}\\
\fl F^{(3)}_n = \lambda_n  \bigl[ H_0 + k_n^2\bigr]^{-1} v_{13} \bigl[ H_0 + \lambda_n
(v_{13})_+ + \lambda_n (v_{23})_+  + k_n^2\bigr]^{-1} (v_{13})_- \psi_n  .
\label{F3}
\end{eqnarray}

We introduce another pair of Jacobi coordinates
$\eta = [\sqrt{2 \mu_{13}}/\hbar](r_3 - r_1)$ and
$\zeta = [\sqrt{2 M_{13}}/\hbar](r_2 - m_1/(m_1+m_3) r_1 - m_3/(m_1+m_3) r_3)$.
The coordinates $(\eta , \zeta)$ and $(x,y)$ are connected through the orthogonal linear
transformation
\begin{eqnarray}
 x = m_{x\eta} \eta +  m_{x\zeta} \zeta , \\
 y = m_{y\eta} \eta +  m_{y\zeta} \zeta,
\end{eqnarray}
where $m_{x\eta}, m_{x\zeta} \neq 0, m_{y\eta}, m_{y\zeta}$
are real and can be expressed through mass ratios in the
system.
$\mathcal{F}_{13}$ denotes the partial Fourier transform, which acts on $f(\eta,
\zeta) $ as
\begin{equation}\label{xw20}
 \mathcal{F}_{13} f := \hat f(\eta, p_\zeta) = \frac 1{(2 \pi )^{3/2}} \int d^3
\zeta \; e^{-ip_\zeta \cdot \; \zeta} f(\eta, \zeta) .
\end{equation}
Let us introduce the operator function
\begin{equation}\label{b13}
 \tilde B_{13} (k_n) := \mathcal{F}^{-1}_{13} \tilde t_n (p_\zeta)
\mathcal{F}_{13} ,
\end{equation}
where
\begin{equation}\label{tail}
    \tilde t_n (p_\zeta) = \left\{ \begin{array}{ll}
    |p_\zeta|^{1-\delta} + (k_n)^{1-\delta}  & \quad \mathrm{if} \;\; |p_\zeta|
\leq 1  \\
    1 + (k_n)^{1-\delta} & \quad \mathrm{if} \;\; |p_\zeta| \geq 1 . \\
    \end{array}
    \right.
\end{equation}

We set tilde over the operator in order to distinguish it from the one defined
in Eq.~(18) in \cite{1}. Note that  $\tilde B_{13} (k_n)$ and $\tilde
B^{-1}_{13} (k_n)$ for each $n$ are bounded
operators.

Using the inequalities from \cite{2} (see Eqs.~(17)--(24) in \cite{2}) we obtain
\begin{eqnarray}
 \fl  |F^{(1)}_n| \leq \bigl[ H_0 + k_n^2\bigr]^{-1} |v_{13}| \bigl[ H_0 +
k_n^2\bigr]^{-1} |v_{12}| |\psi_n| 
 = \bigl[ H_0 + k_n^2\bigr]^{-1}
|v_{13}|^{1/2}
\tilde B_{13}(k_n) \Psi^{(1)}_n  , \label{xw3}\\
 \fl  |F^{(2)}_n| \leq \lambda_n \bigl[ H_0 + k_n^2\bigr]^{-1} |v_{13}| \bigl[ H_0 +
k_n^2\bigr]^{-1} |v_{23}| |\psi_n| = 
\bigl[ H_0 + k_n^2\bigr]^{-1}
|v_{13}|^{1/2}
\tilde B_{13}(k_n) \Psi^{(2)}_n \label{xw4},
\end{eqnarray}
where
\begin{eqnarray}
 \Psi^{(1)}_n  := |v_{13}|^{1/2} \tilde B_{13}^{-1} (k_n) \bigl[ H_0 +
k_n^2\bigr]^{-1} |v_{12}| |\psi_n| , \label{psin1}\\
 \Psi^{(2)}_n  := \lambda_n |v_{13}|^{1/2} \tilde B_{13}^{-1} (k_n) \bigl[ H_0 +
k_n^2\bigr]^{-1} |v_{23}| |\psi_n|  \label{psin2}.
\end{eqnarray}
To write the upper bound on $ |F^{(3)}_n|$ we use the following expression, which
follows from (\ref{ihami3}), c. f. Eq.~(15) in \cite{2}
\begin{equation}\label{gemor}
 (v_{13})^{1/2}_- \psi_n = Q_n (v_{13})^{1/2}_- \bigl[ H_0 + \lambda_n
(v_{13})_+ +k_n^2\bigr]^{-1} \bigl(|v_{12}| - \lambda_n v_{23}\bigr) \psi_n ,
\end{equation}
where we defined
\begin{equation}
 Q_n := \Bigl\{1 - \lambda_n (v_{13})^{1/2}_- \bigl[H_0 + \lambda_n (v_{13})_+
+ k_n^2\bigr]^{-1} (v_{13})^{1/2}_- \Bigr\}^{-1} .
\end{equation}
$Q_n $ is a positivity preserving operator and $\sup_n \| Q_n \| < \infty$, see
Lemma~1 in \cite{2} and Lemma~12 in \cite{1}. 
Substituting (\ref{gemor}) into (\ref{F3}) and using the positivity preserving
property of the operators (see the discussion after Eq.~(16) in \cite{2}) we get
\begin{eqnarray}
 |F^{(3)}_n| \leq \lambda_n \bigl[ H_0 + k_n^2\bigr]^{-1} |v_{13}| \bigl[H_0 +
k_n^2\bigr]^{-1} (v_{13})^{1/2}_- Q_n
(v_{13})^{1/2}_- \nonumber \\
\times \bigl[ H_0 +k_n^2\bigr]^{-1} \bigl(|v_{12}| +  \lambda_n
|v_{23}|\bigr)|\psi_n| 
= \bigl[ H_0 + k_n^2\bigr]^{-1} |v_{13}|^{1/2}
\tilde B_{13}(k_n) \Psi^{(3)}_n \label{xw5},
\end{eqnarray}
where
\begin{eqnarray}
 \Psi^{(3)}_n  := \lambda_n  |v_{13}|^{1/2} \bigl[ H_0 + k_n^2\bigr]^{-1}  (v_{13})^{1/2}_-
  Q_n  \tilde B_{13}^{-1} (k_n)   (v_{13})^{1/2}_-  \nonumber \\
\times \bigl[ H_0 + k_n^2\bigr]^{-1}  \bigl(|v_{12}| +  \lambda_n
|v_{23}|\bigr)|\psi_n| . \label{psin3}
\end{eqnarray}
Summarizing, (\ref{xw3}), (\ref{xw4}) and (\ref{xw5}) can be expressed through the inequality
\begin{equation}\label{18.04/1}
  |F^{(i)}_n| \leq \mathcal{L}_n  \Psi^{(i)}_n \quad \quad (i=1,2,3),
\end{equation}
where
\begin{equation}\label{18.04/2}
\mathcal{L}_n  :=  \bigl[ H_0 + k_n^2\bigr]^{-1} |v_{13}|^{1/2}
\tilde B_{13}(k_n) . 
\end{equation}
From Lemma~\ref{lem:2} it follows that $\| F^{(i)}_n \| \to 0$.
 \end{proof}
\begin{lemma}\label{lem:2}
 The operators $\mathcal{L}_n $ are uniformly norm--bounded and
$\|\Psi^{(i)}_n\| \to 0$ for $i=1,2,3$.
\end{lemma}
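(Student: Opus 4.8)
\noindent\textbf{Proof proposal for Lemma~\ref{lem:2}.}
The plan is to treat the two assertions separately. For the uniform norm bound on $\mathcal{L}_n = [H_0+k_n^2]^{-1}|v_{13}|^{1/2}\tilde B_{13}(k_n)$ I would pass to the Jacobi coordinates $(\eta,\zeta)$ and conjugate by the partial Fourier transform $\mathcal{F}_{13}$. Since the change of variables $(x,y)\mapsto(\eta,\zeta)$ is orthogonal, $\mathcal{F}_{13}H_0\mathcal{F}_{13}^{-1} = -\Delta_\eta + |p_\zeta|^2$; since $V_{13}$ is a function of $\eta$ only it commutes with $\mathcal{F}_{13}$; and $\tilde B_{13}(k_n)$ becomes multiplication by the symbol $\tilde t_n(p_\zeta)$. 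Hence $\mathcal{F}_{13}\mathcal{L}_n\mathcal{F}_{13}^{-1}$ is the direct integral over $p_\zeta$ of the fibre operators $\tilde t_n(p_\zeta)\,(-\Delta_\eta + m^2)^{-1}|v_{13}(\eta)|^{1/2}$ on $L^2(\mathbb{R}^3_\eta)$ with $m := (|p_\zeta|^2 + k_n^2)^{1/2}$, so that $\|\mathcal{L}_n\| = \sup_{p_\zeta}\tilde t_n(p_\zeta)\,\| |v_{13}|^{1/2}(-\Delta_\eta + m^2)^{-1}\|$.

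For each fibre I would estimate the operator norm by the Hilbert--Schmidt norm, using the explicit three--dimensional Yukawa kernel $(-\Delta_\eta + m^2)^{-1}(\eta,\eta') = e^{-m|\eta-\eta'|}/(4\pi|\eta-\eta'|)$; a one--line computation gives $\| |v_{13}|^{1/2}(-\Delta_\eta+m^2)^{-1}\|^2 \le (8\pi m)^{-1}\int d^3r\,|V_{13}(r)|$, which is finite by R1. Combining this with the elementary bounds $\tilde t_n(p_\zeta)\le 2m^{1-\delta}$ for $|p_\zeta|\le1$ and $\tilde t_n(p_\zeta)\le 1+\sup_n k_n^{1-\delta}$ for $|p_\zeta|\ge1$, the fibre norm is at most a constant times $m^{1/2-\delta}$ on the first region — where $m\le(1+k_n^2)^{1/2}$ is bounded because $E_n\to0$ — and at most a constant times $m^{-1/2}\le1$ on the second; since $\delta<1/2$ the supremum over $p_\zeta$ is finite uniformly in $n$. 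This settles the first assertion.

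For $\|\Psi^{(i)}_n\|\to0$ I would use that each $\Psi^{(i)}_n$ is of the form $A_n\,[H_0+k_n^2]^{-1}\,|v_{1j}|\,|\psi_n|$ with $j\in\{2,3\}$, where $A_n$ collects $\tilde B_{13}^{-1}(k_n)$, one factor $|v_{13}|^{1/2}$, and for $i=3$ also the Birman--Schwinger block $|v_{13}|^{1/2}[H_0+k_n^2]^{-1}(v_{13})^{1/2}_-$ and the operator $Q_n$, whose uniform boundedness is recalled after~(\ref{gemor}). The key point is that $\tilde B_{13}(k_n)$, hence $\tilde B_{13}^{-1}(k_n)$, commutes with multiplication by any function of $\eta$, in particular with $|v_{13}|^{1/2}$; working again in the $\mathcal{F}_{13}$--fibre picture I would bound the part of $A_n$ standing to the left of the remaining resolvent, absorbing the low--$p_\zeta$ divergence of $1/\tilde t_n(p_\zeta)$ (which is at most $k_n^{-(1-\delta)}$) against the $V_{13}$--weighted resolvent kernel together with the second resolvent factor, so that what is left is an estimate of the type $\|\Psi^{(i)}_n\|\le C\,\|\,g_n\,|\psi_n|\,\|$ with a non--negative weight $g_n$ built from $|v_{1j}|$ and $[H_0+k_n^2]^{-1}$. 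Finally I would show $\|g_n|\psi_n|\|\to0$ from the total spreading of $\psi_n$: split the $(x,y)$--integral into $\{\rho\le R\}$, where $\int_{\rho\le R}|\psi_n|^2\to0$ by hypothesis, and $\{\rho>R\}$, where the exponential decay of the resolvent kernel together with the moment weights $(1+|r|)^{2\delta}$ furnished by R1 makes the contribution small uniformly in $n$ provided $R$ is large; letting $R\to\infty$ gives the claim. Equation~(\ref{kost2}) and the $v_{23}$ channel are handled by the mirror construction in the Jacobi coordinates adapted to the pair $\{2,3\}$.

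The main obstacle is this last step. The operator $\tilde B_{13}^{-1}(k_n)$ diverges in norm like $k_n^{-(1-\delta)}$ as $k_n\to0$, so it cannot be controlled in isolation; one must pair it with \emph{both} resolvent factors and with the decay of $V_{13}$ in the fibre representation, and only then extract the vanishing from total spreading, all the while keeping every estimate uniform in $n$. Balancing the competing powers coming from $\tilde t_n(p_\zeta)$, from the three--dimensional resolvent (a gain of order $m^{1/2}$ per Birman--Schwinger factor), and from the weights in R1 is precisely where the restriction $\delta<1/8$ is consumed: a crude use of any one inequality loses too much, so the bookkeeping of the Hölder exponents is the delicate part. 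Everything else reduces to the Cauchy--Schwarz and Hilbert--Schmidt machinery already employed in the proof of Theorem~\ref{th:1} and in~\cite{1,2}.
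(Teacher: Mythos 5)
Your treatment of the uniform bound on $\mathcal{L}_n$ is essentially the paper's own argument: conjugate by $\mathcal{F}_{13}$, write the fibre operators with the explicit Yukawa kernel, and estimate by the Hilbert--Schmidt norm to get $\|\mathcal{L}_n\|^2 \leq C_0 \sup_{p_\zeta} |\tilde t_n (p_\zeta)|^2 (p_\zeta^2+k_n^2)^{-1/2}$, which is bounded uniformly in $n$. The algebraic skeleton of your second half is also close to what the paper does: there the divergence of $\tilde B_{13}^{-1}(k_n)$ is tamed by splitting its symbol as $\bigl[\tilde t_n^{-1}(p_\zeta) - (1+k_n^{1-\delta})^{-1}\bigr] + (1+k_n^{1-\delta})^{-1}$, the first piece being bounded by $|p_\zeta|^{-(1-\delta)}$ on $|p_\zeta|\leq 1$ and absorbed into a convergent Hilbert--Schmidt integral $\int_{|p_\zeta|\leq 1}|p_\zeta|^{-(3-2\delta)}d^3p_\zeta$, while for $i=3$ the blocks $\mathcal{T}^{(1,2,3)}_n$ and $Q_n$ are shown to be uniformly bounded. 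Up to that point you are on track.

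The genuine gap is your final step, where you claim that the scalar factor $\bigl\| |v_{1j}|^{1/2}|\psi_n| \bigr\|$ (your $\|g_n|\psi_n|\|$) tends to zero as a consequence of total spreading via a split into $\{\rho\leq R\}$ and $\{\rho>R\}$. It does not. The weight $|V_{12}(\alpha x)|$ depends on $x$ alone, so on the two--cluster tube $\{|x|\leq 1,\ |y|>R\}\subset\{\rho>R\}$ it is of order one, and total spreading places no restriction whatever on $\int_{|x|\leq 1,\,|y|>R}|\psi_n|^2$: the mass could in principle escape to infinity with the pair $\{1,2\}$ staying clustered, in which case $\bigl\||v_{12}|^{1/2}\psi_n\bigr\|$ would not vanish. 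Neither half of your split sees this region, and no amount of resolvent decay or $(1+|r|)^{2\delta}$ moments from R1 helps, because those weights live in the wrong variable. Ruling out this channel requires the spectral hypotheses R3 (no negative pair bound states, the $\{1,3\}$ and $\{2,3\}$ subsystems not at critical coupling) together with $E_n\to 0$, and it is exactly the content of the no--clustering theorem of \cite{2}, which the paper invokes at this point as a separate, nontrivial input. Without importing or reproving that result your argument does not close. A minor further inaccuracy: the restriction $\delta<1/8$ is not what is consumed in this lemma --- the integrals here only need $\delta<1/2$; the sharper bound is used in the machinery quoted from \cite{1} and in Lemma~\ref{lem:4}.
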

\begin{proof}
 The proof  that  $\mathcal{L}_n $ are
uniformly norm--bounded is similar to Lemma~6 in \cite{1}. Indeed, 
$K_n := \mathcal{F}_{13}\mathcal{L}_n \mathcal{F}_{13}^{-1}$ is an integral operator with the kernel
\begin{equation}
 k_n (\eta , \eta';p_\zeta) = \frac{e^{-\sqrt{p_\zeta^2 +
k_n^2}|\eta-\eta'|}}{4\pi|\eta - \eta'|} \bigl| V_{13} (\alpha'
\eta')\bigr|^{1/2} \tilde t_n (p_\zeta) , 
\end{equation}
where $\alpha' := \hbar /\sqrt{2\mu_{13}} $, which acts on $f(\eta, p_\zeta) \in L^2 (\mathbb{R}^6)$ as follows 
\begin{equation}
K_n f = \int d^3 \eta' k_n (\eta , \eta';p_\zeta) f(\eta' , p_\zeta).
\end{equation}
Therefore, we can estimate the norm as 
\begin{equation}\label{xw7}
\|\mathcal{L}_n \|^2 = \| K_n \|^2 \leq \sup_{p_\zeta } \int \bigl| k_n (\eta , \eta';p_\zeta) \bigr|^2 \; d^3 \eta' d^3 \eta = C_0 \sup_{p_\zeta } \frac{|\tilde t_n (p_\zeta)|^2}{\sqrt{p_\zeta^2 +
k_n^2}},
\end{equation}
where
\begin{equation}
 C_0 := \frac 1{16\pi^2} \left( \int \frac{e^{-2|s|}}{|s|^2} d^3 s\right)   \left( \int \bigl| V_{13} (\alpha'
\eta)\bigr| d^3 \eta \right) \leq \frac {\gamma_0}{8\pi} , 
\end{equation}
and $\gamma_0$ was defined in (\ref{restr}). Substituting (\ref{tail}) into (\ref{xw7}) it is easy to see that $\|\mathcal{L}_n \|$ is uniformly bounded. 
Let us rewrite (\ref{psin1}) as
\begin{equation}
  \Psi^{(1)}_n  := \bigl[ \mathcal{M}^{(1)}_n + \mathcal{M}^{(2)}_n \bigr] |v_{12}|^{1/2}  |\psi_n| ,
\end{equation}
where
\begin{eqnarray}
\mathcal{M}^{(1)}_n := |v_{13}|^{1/2} \Bigl\{ \tilde B_{13}^{-1} (k_n) - \bigl(1+ (k_n)^{1-\delta}\bigr)^{-1} \Bigr\} \bigl[ H_0 + k_n^2\bigr]^{-1} |v_{12}|^{1/2}  , \\
\mathcal{M}^{(2)}_n := \bigl( 1+ (k_n)^{1-\delta} \bigr)^{-1}  |v_{13}|^{1/2} \bigl[ H_0 + k_n^2\bigr]^{-1} |v_{12}|^{1/2} . 
\end{eqnarray}
 By the no--clustering theorem $\bigl\||v_{12}|^{1/2}  |\psi_n|\bigr\| \to 0$, see Appendix in \cite{2}. Thus to prove that $\bigl\| \Psi^{(1)}_n \bigr\| \to 0$
it is enough to show that $\sup_n \bigl\|\mathcal{M}^{(1,2)}_n \bigr\| < \infty$. It is easy to see that $\| \mathcal{M}^{(2)}_n \|$ is uniformly norm--bounded, see f.~e. the proof of 
Lemma~7 in \cite{1}. Next, $\| \mathcal{M}^{(1)}_n \| = \| K'_n \| $, where
$K'_n := \mathcal{F}_{13}\mathcal{M}_n \mathcal{F}_{13}^{-1}$ is the integral operator with the kernel
\begin{eqnarray}
k'_n (\eta, p_\zeta, \eta', p'_\zeta ) = \frac 1{2^{7/2}\pi^{5/2} \omega^3} \Bigl[ \tilde t^{-1}_n (p_\zeta) - \bigl(1+ (k_n)^{1-\delta}\bigr)^{-1} \Bigr] \bigl| V_{13}  (\alpha' \eta)\bigr|^{1/2} \nonumber \\
\times \frac{e^{-\sqrt{p_\zeta^2 + k_n^2} |\eta-\eta'|}}{|\eta-\eta'|}
\exp{\left\{i\frac{\beta}{\omega} \eta' \cdot (p_\zeta - p'_\zeta)\right\}}\:
\widehat{\bigl| V_{12}\bigr|^{1/2}} \bigl( (p_\zeta - p'_\zeta)/\omega \bigr) ,  \label{four27}
\end{eqnarray}
$\beta := -m_3 \hbar / ((m_1 + m_3)\sqrt{2\mu_{13}})$
and $\omega := \hbar/\sqrt{2M_{13}}$
(see the proof of Lemma~9 in \cite{1}). In (\ref{four27}) $\widehat{\bigl| V_{12}\bigr|^{1/2}}$ denotes merely
the Fourier transform of $\bigl| V_{12}\bigr|^{1/2} \in L^2 (\mathbb{R}^3)$. Calculation of the Hilbert--Schmidt norm gives
\begin{eqnarray}
 \fl  \| \mathcal{M}^{(1)}_n \|^2 \leq \frac{C'_0}{8 \omega^3 \pi^3} \int_{|p_\zeta| \leq 1}  \frac{\bigl[|p_\zeta|^{1-\delta} + (k_n)^{1-\delta}\bigr]^{-2} }{\sqrt{p_\zeta^2 + k_n^2}} d^3 p_\zeta 
\leq \frac{C'_0}{8 \omega^3 \pi^3} \int_{|p_\zeta| \leq 1}  \frac {d^3 p_\zeta}{|p_\zeta|^{3-2\delta}} \label{xw9},
\end{eqnarray}
where
\begin{equation}
 C'_0 := C_0 \int d^3 s \Bigl| \widehat{\bigl| V_{12}\bigr|^{1/2}} (s)\Bigr|^2 . 
\end{equation}
From (\ref{xw9}) it follows that $\| \mathcal{M}^{(1)}_n \|$ is uniformly bounded and, hence, $\| \Psi^{(1)}_n \| \to 0$. The fact that $\| \Psi^{(2)}_n \| \to 0$
is proved analogously. To prove that $\| \Psi^{(3)}_n \| \to 0$ let us look at (\ref{psin3}). We can write
\begin{equation}
 \Psi^{(3)}_n = \lambda_n \mathcal{T}^{(1)}_n Q_n \bigl(\mathcal{T}^{(2)}_n |v_{12}|^{1/2} |\psi_n| +  \mathcal{T}^{(3)}_n |v_{23}|^{1/2} |\psi_n|\bigr), \label{psin32}
\end{equation}
where we defined the operators
\begin{eqnarray}
 \mathcal{T}^{(1)}_n := |v_{13}|^{1/2} \bigl[ H_0 + k_n^2\bigr]^{-1}  (v_{13})^{1/2}_-  , \\
 \mathcal{T}^{(2)}_n := \tilde B_{13}^{-1} (k_n) (v_{13})^{1/2}_- \bigl[ H_0 + k_n^2\bigr]^{-1} |v_{12}|^{1/2}  , \\
\mathcal{T}^{(3)}_n := \lambda_n \tilde B_{13}^{-1} (k_n) (v_{13})^{1/2}_- \bigl[ H_0 + k_n^2\bigr]^{-1} |v_{23}|^{1/2}  . 
\end{eqnarray}
The operators $Q_n$ are uniformly norm--bounded. The operators $ \mathcal{T}^{(1)}_n$ are also uniformly norm--bounded. Note that $\mathcal{T}^{(2)}_n = \mathcal{M'}^{(1)}_n  + \mathcal{M'}^{(2)}_n $, where 
$\mathcal{M'}^{(1,2)}_n$ is defined exactly as $\mathcal{M}^{(1,2)}_n$ except that $|v_{13}|$ gets replaced with $(v_{13})_-$. Thus from the above analysis 
it follows that $ \| \mathcal{T}^{(2)}_n \|$ is uniformly bounded. By similar arguments $ \| \mathcal{T}^{(3)}_n \|$ is uniformly bounded. 
Thus due to $\bigl\| |v_{ik}|^{1/2} |\psi_n| \bigr\| \to 0$ (see the no--clustering theorem in \cite{2}) the expression on the rhs of (\ref{psin32}) goes to zero in norm.
 \end{proof}

\begin{proof}[Proof of Theorem~\ref{th:2}]
Instead of (\ref{3bh3}) it suffices to prove that 
\begin{equation}\label{3bh}
 \left\| \hat \psi_n - \frac{\sqrt 2 e^{i\varphi_n}}{4\pi |\ln k_n|^{1/2}}
\frac{\chi_{[k_n, 1]} (|p_y|) e^{-|p_y| |x|}}{|x||p_y|} \right\| \to 0,
\end{equation}
where the hat denotes the action of the partial Fourier transform
$\mathcal{F}_{12}$, see Eq.~(17) in \cite{1}. Indeed, 
 computing explicitly the inverse Fourier transform
\begin{eqnarray}
 \fl  \frac{\sqrt 2}{(4\pi)  |\ln k_n|^{1/2}} \mathcal{F}_{12}^{-1} \left( \frac{\chi_{[k_n, 1]} (|p_y|) e^{-|p_y| |x|}}{|x||p_y|}\right) 
= \frac 1{2 \pi^{3/2} |\ln k_n|^{1/2}} 
\frac 1{|x||y|(|x|^2+|y|^2)} \nonumber\\
\fl \times \Bigl\{ e^{-|x|} \bigl[-|x|\sin |y| - |y|\cos|y|\bigr] 
- e^{-k_n|x|} \bigl[-|x|\sin(k_n |y|) - |y|\cos(k_n |y|)\bigr] \Bigr\} \label{17.04/1}
\end{eqnarray}
Now (\ref{3bh3}) follows directly
from (\ref{17.04/1}), (\ref{3bh})
after dropping those terms, whose norm goes to zero.

By Lemma~\ref{lem:1} $\|\psi_n - f^{(1)}_n\| \to 0$, where we have set $f^{(1)}_n :=
\bigl[H_0
+ k_n^2\bigr]^{-1} |v_{12}| \psi_n$.
From the Schr\"odinger equation for the term $\sqrt{|v_{12}|} \psi_n$ we obtain
\begin{eqnarray}
\sqrt{|v_{12}|} \psi_n =  - \Bigl\{1- \sqrt{|v_{12}|} \bigl(H_0 +
k_n^2\bigr)^{-1}\sqrt{|v_{12}|} \Bigr\}^{-1}
\sqrt{|v_{12}|} \nonumber \\
\times \bigl[H_0 + k_n^2\bigr]^{-1}\bigl(\lambda_n v_{13} + \lambda_n
v_{23}\bigr)\psi_n  . \label{this}
\end{eqnarray}
Substituting (\ref{this}) into the expression for $f^{(1)}_n$ results in
\begin{equation}\label{f_n}
f^{(1)}_n = \bigl[H_0 + k_n^2\bigr]^{-1} \sqrt{|v_{12}|} \Bigl\{1-
\sqrt{|v_{12}|} \bigl(H_0 + k_n^2\bigr)^{-1}\sqrt{|v_{12}|} \Bigr\}^{-1} \Phi_n
,
\end{equation}
where
\begin{equation}\label{Phi_n2}
\Phi_n := - \lambda_n \sqrt{|v_{12}|} \bigl[H_0 + k_n^2\bigr]^{-1}\bigl(v_{13} +
v_{23}\bigr)\psi_n  . 
\end{equation}
From the proofs of Lemmas~6, 9 in \cite{1} it follows that the operators $\sqrt{|v_{12}|} \bigl[H_0 + k_n^2\bigr]^{-1} \sqrt{|v_{s3}|} $ and 
$B^{-1}_{12} (k_n) \sqrt{|v_{12}|} \bigl[H_0 + k_n^2\bigr]^{-1} \sqrt{|v_{s3}|} $, where $B_{12}(k_n)$ is defined in Eqs.~(18)--(19) in
\cite{1}, are uniformly norm--bounded for $s=1,2$. 
Thus by (\ref{Phi_n2}) and Theorem~3 in \cite{2} $\| \Phi_n \| \to 0$ and $\|
B^{-1}_{12} (k_n) \Phi_n \| \to 0$. 
Acting with $\mathcal{F}_{12}$ on (\ref{f_n}) gives
\begin{equation}
\fl  \hat f^{(1)}_n  = \bigl[-\Delta_x + p_y^2 + k_n^2\bigr]^{-1} \sqrt{|v_{12}|}
\Bigl\{1- \sqrt{|v_{12}|} \bigl(-\Delta_x + p_y^2 +
k_n^2\bigr)^{-1}\sqrt{|v_{12}|} \Bigr\}^{-1} \hat \Phi_n  . 
\end{equation}
Because $\| \hat \Phi_n \| \to 0$ we can write
\begin{equation}
\hat f^{(1)}_n =  \hat f^{(2)}_n  + \hbox{o} (1) , 
\end{equation}
where
\begin{equation}
 \hat f^{(2)}_n := \chi_{[0, \rho_0]} \Bigl( \sqrt{p_y^2 +k_n^2}\Bigr) \hat
f^{(1)}_n ,
\end{equation}
and $\rho_0$ is a constant defined in Lemma~11 in \cite{1}.
Now using Lemma~11 in \cite{1} (see also discussion around Eq.~(111) in \cite{1}) we obtain
\begin{equation}
 \fl  \hat f^{(2)}_n =   \hat f^{(3)}_n + \chi_{[0, \rho_0]} \Bigl( \sqrt{p_y^2 +k_n^2}\Bigr) \mathcal{F}_{12} \mathcal{A}_{12} (k_n) \mathcal{F}^{-1}_{12} \mathcal{Z} \left( \sqrt{p_y^2 +k_n^2} \right) B^{-1}_{12} (k_n) \hat \Phi_n , 
\end{equation}
where $\mathcal{A}_{12} (k_n) := \bigl[H_0 + k_n^2\bigr]^{-1} \sqrt{|v_{12}|} B_{12} (k_n)$ and $\mathcal{Z}$ defined in \cite{1} remain uniformly norm--bounded for all $n$, 
see Lemmas~6, 11 in \cite{1}. The 
function $\hat f^{(3)}_n$ is defined as follows 
\begin{equation}
\hat f^{(3)}_n := \chi_{[0, \rho_0]} \Bigl( \sqrt{p_y^2 +k_n^2}\Bigr)
\bigl[-\Delta_x + |p_y|^2 + k_n^2\bigr]^{-1} \frac{\sqrt{|v_{12}|} }{a
\sqrt{|p_y|^2  + k_n^2}} \mathbb{P}_0 \hat \Phi_n , \label{xw11}
\end{equation}
where $a$ and $\mathbb{P}_0$ are defined in Eq.~(80) and Lemma~11 in \cite{1}. 
Therefore, since $\|
B^{-1}_{12} (k_n) \Phi_n \| \to 0$ 
\begin{equation}
  \hat f^{(2)}_n =   \hat f^{(3)}_n + \hbox{o}(1) . 
\end{equation}
It makes sense to introduce
\begin{equation}\label{gn}
 g_n (y) := \int d^3 x \phi_0 (x) \Phi_n (x, y) ,
\end{equation}
where $\phi_0$ was defined in Eq.~(77) in \cite{1}. The following inequality trivially follows from the exponential
bound on $V_{12}$ and the definition of $\phi_0$
\begin{equation}\label{tildephi1}
\phi_0 (x) \leq b'_1 e^{-b'_2 |x|} ,
\end{equation}
where $b'_{1,2} > 0$ are constants.
From the pointwise exponential fall off of $\psi_n$  it follows that
$g_n \in L^2 (\mathbb{R}^3)\cap L^1 (\mathbb{R}^3)$ for each $n$. We rewrite
(\ref{xw11}) with the help of (\ref{gn})
\begin{eqnarray}
 \fl  \hat f^{(3)}_n = \chi_{[0, \rho_0]} \Bigl( \sqrt{p_y^2 +k_n^2}\Bigr)
\bigl[-\Delta_x + p_y^2 + k_n^2\bigr]^{-1} \frac{\sqrt{|v_{12}|} \phi_0 (x) \hat
g_n (p_y) }{a \sqrt{p_y^2 + k_n^2}}    \nonumber\\
\fl  = \chi_{[0, \rho_0]} \Bigl( \sqrt{p_y^2 +k_n^2}\Bigr) \frac{\hat g_n (p_y)}{4\pi
a \sqrt{p_y^2 + k_n^2}} \int d^3 x' \frac{e^{-\sqrt{p_y^2 +
k_n^2}|x-x'|}}{|x-x'|} \phi_0  (x') |V_{12} (\alpha x') |^{1/2}, 
\end{eqnarray}
where $\alpha := \hbar /\sqrt{2\mu_{12}} $. Next, let us define
\begin{equation}
 \fl  \hat f^{(4)}_n := \chi_{[0, \rho_0]} \Bigl( \sqrt{p_y^2 +k_n^2}\Bigr)
\frac{\hat g_n (0)}{4\pi a \sqrt{p_y^2 + k_n^2}} \int d^3 x'
\frac{e^{-\sqrt{p_y^2 + k_n^2}|x-x'|}}{|x-x'|} \phi_0  (x') |V_{12}
(\alpha x') |^{1/2} , 
\end{equation}
where $ \hat g_n (0) \in \mathbb{C}$ is well-defined since $g_n \in  L^1
(\mathbb{R}^3)$ for each $n$. Using Lemma~\ref{lem:3} and the notation in (\ref{W(y)}) gives
\begin{eqnarray}
 \fl \| \hat f^{(4)}_n  - \hat f^{(3)}_n \|^2 \leq \int d^3 p_y \; \chi_{[0,
\rho_0]} \Bigl( \sqrt{p_y^2 +k_n^2}\Bigr)
\frac{c_n^2 |p_y|^{2\delta}}{16\pi^2 a^2 (p_y^2 + k_n^2)^{3/2}} \nonumber \\
\fl \times \int d^3 x' \int d^3 x'' W\Bigl(\sqrt{p_y^2 +k_n^2} (x'' - x' )\Bigr)
\phi_0  (x') |V_{12} (\alpha x')|^{1/2} \phi_0  (x'') |V_{12} (\alpha
x'')|^{1/2} \nonumber\\
\fl  \leq \frac{\vartheta^2 c_n^2 }{16\pi^2 a^2 }\int d^3 p_y \; \chi_{[0, \rho_0]} \Bigl( \sqrt{p_y^2 +k_n^2}\Bigr)
\frac{|p_y|^{2\delta}}{(p_y^2 + k_n^2)^{3/2}}  , 
\end{eqnarray}
where we used $W(s) \leq 2\pi$ and set 
\begin{equation}\label{vartheta}
 \vartheta  := \int d^3 x' \phi_0  (x') |V_{12} (\alpha x') |^{1/2} . 
\end{equation}
The constant in (\ref{vartheta}) is bounded, hence, by Lemma~\ref{lem:4} $ \| \hat f^{(4)}_n  - \hat f^{(3)}_n \|
\to 0$. As the next step we introduce
\begin{equation}
 \hat f^{(5)}_n := \chi_{[0, \rho_0]} \Bigl( \sqrt{p_y^2 +k_n^2}\Bigr)
\frac{R\Bigl( \sqrt{p_y^2 +k_n^2}\Bigr) \hat g_n (0)}{4\pi a \sqrt{p_y^2 +
k_n^2}} \frac{e^{-\sqrt{p_y^2 + k_n^2}|x|}}{|x|} ,
\end{equation}
where
\begin{equation}
R (s) :=  \int d^3 x' \frac{e^{-s|x'|}}{|x'|} \phi_0  (x') |V_{12}
(\alpha x') |^{1/2} . 
\end{equation}
Like in the proof of Theorem~\ref{th:1} we evaluate the square of the norm of the
difference
\begin{eqnarray}
 \| \hat f^{(5)}_n  - \hat f^{(4)}_n \|^2 \leq \int d^3 p_y \; \chi_{[0,
\rho_0]} \Bigl( \sqrt{p_y^2 +k_n^2}\Bigr)
\frac{|\hat g_n (0)|^2}{16\pi^2 a^2 (p_y^2 + k_n^2)^{3/2}} \nonumber \\
\times \int d^3 x' \int d^3 x'' \Bigl\{ W\Bigl(\sqrt{p_y^2 +k_n^2} (x'' - x'
)\Bigr)  + W(0) - W\Bigl(\sqrt{p_y^2 +k_n^2}x' \Bigr) \nonumber\\ 
  - W\Bigl(\sqrt{p_y^2
+k_n^2}x'' \Bigr)   \nonumber \Bigr\}
 \phi_0  (x') |V_{12}(\alpha x')|^{1/2}  \phi_0  (x'') |V_{12}
(\alpha x'') |^{1/2} \nonumber\\
\leq \frac{|\hat g_n (0)|^2 }{2\pi a^2 }\int d^3 p_y \;
\frac{\chi_{[0, \rho_0]} \Bigl( \sqrt{p_y^2 +k_n^2}\Bigr) }{(p_y^2 + k_n^2)}
\nonumber\\
\times \int d^3 x' \int d^3 x'' \; |x'| \phi_0  (x') |V_{12} (\alpha x')|^{1/2}
\phi_0  (x'') |V_{12} (\alpha x'')|^{1/2}  . 
\end{eqnarray}
On account of R1 and (\ref{tildephi1})
we conclude that $ \| \hat f^{(5)}_n  - \hat
f^{(4)}_n \| \to 0$ since $|\hat g_n (0)| \to 0$ by Lemma~\ref{lem:4}.
Observe that
\begin{equation}
 \left|R(s) - R(0)\right|\leq s \vartheta, 
\end{equation}
where $\vartheta$ is defined in (\ref{vartheta}). Hence, $ \| \hat f^{(6)}_n  - \hat f^{(5)}_n \| \to 0$, where by
definition
\begin{equation}
 \hat f^{(6)}_n := \chi_{[0, \rho_0]} \Bigl( \sqrt{p_y^2 +k_n^2}\Bigr)
\frac{R(0)\hat g_n (0)}{4\pi a \sqrt{p_y^2 + k_n^2}} \frac{e^{-\sqrt{p_y^2 +
k_n^2}|x|}}{|x|} . 
\end{equation}
Simplifying the argument of the exponential function we define
\begin{equation}
 \hat f^{(7)}_n := \chi_{[0, \rho_0]} \Bigl( \sqrt{p_y^2 +k_n^2}\Bigr)
\frac{R(0)\hat g_n (0)}{4\pi a \sqrt{p_y^2 + k_n^2}} \frac{e^{-|p_y||x|}}{|x|} . 
\end{equation}
After straightforward calculation we obtain
\begin{eqnarray}
 \| \hat f^{(7)}_n  - \hat f^{(6)}_n \|^2 = \int d^3 p_y \; \chi_{[0, \rho_0]}
\Bigl( \sqrt{p_y^2 +k_n^2}\Bigr)
\frac{R^2 (0)|\hat g_n (0)|^2}{4\pi a^2 (p_y^2 + k_n^2)} \nonumber \\
\times \left[\frac 1{2\sqrt{p_y^2 + k_n^2}} + \frac 1{2|p_y|} - \frac
2{\sqrt{p_y^2 + k_n^2} + |p_y|}\right]  . 
\end{eqnarray}
Replacing in the last fraction $|p_y|$ with $\sqrt{p_y^2 + k_n^2}$ results in
the following inequality
\begin{equation}
 \fl  \| \hat f^{(7)}_n  - \hat f^{(6)}_n \|^2 \leq \frac{R^2 (0)|\hat g_n
(0)|^2}{8\pi a^2 }\int d^3 p_y \;
\frac{\chi_{[0, \rho_0]} \Bigl( \sqrt{p_y^2 +k_n^2}\Bigr) }{(p_y^2 + k_n^2)}
\left[ \frac 1{|p_y|} - \frac 1{\sqrt{p_y^2 + k_n^2}}\right]  . 
\end{equation}
The integrals can be calculated explicitly, see \cite{tables}, which results in $\| \hat f^{(7)}_n  - \hat f^{(6)}_n \| \to 0$.
At last, we simplify the expression setting
\begin{equation}
 \hat f^{(8)}_n := \frac{R(0)\hat g_n (0)}{4\pi a }  \frac{\chi_{[k_n, 1]}
\bigl(|p_y|\bigr)e^{-|p_y||x|}}{|x||p_y|} . 
\end{equation}
Again, one easily finds that $\| \hat f^{(8)}_n  - \hat f^{(7)}_n \| \to 0$.
Summarizing, we have $\| \hat f^{(i+1)}_n  - \hat f^{(i)}_n \| \to 0$ for $i= 1,
\ldots, 7$.
Thus from $\| \hat \psi_n -  \hat f^{(1)}_n \| \to 0 $ it follows that $\| \hat
\psi_n -  \hat f^{(8)}_n \| \to 0 $. Using that  $\| \hat \psi_n \| = 1$ we obtain
(\ref{3bh}).
 \end{proof}

\begin{lemma}\label{lem:3}
There exists a sequence  $c_n >0 $,  $c_n \to 0$ such that
\begin{equation}
 \bigl|\hat g_n (p_y) - \hat g_n (0)\bigr| \leq c_n |p_y|^{\delta} ,
\end{equation}
where $\delta$ is defined in (\ref{restr}).
\end{lemma}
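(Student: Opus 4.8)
The plan is to reduce the statement to an $L^{1}$--decay estimate for $g_{n}$ and then to prove that estimate from the integral representation of $\Phi_{n}$ together with the no--clustering bound. Since $g_{n}\in L^{1}(\mathbb{R}^{3})$, one may write $\hat g_{n}(p_{y})-\hat g_{n}(0)=(2\pi)^{-3/2}\int_{\mathbb{R}^{3}}(e^{-ip_{y}\cdot y}-1)\,g_{n}(y)\,d^{3}y$ and use the elementary bound $|e^{-ip_{y}\cdot y}-1|\le 2^{1-\delta}|p_{y}\cdot y|^{\delta}\le 2^{1-\delta}|p_{y}|^{\delta}|y|^{\delta}$, valid for $0<\delta<1$. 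This already yields $|\hat g_{n}(p_{y})-\hat g_{n}(0)|\le c_{n}|p_{y}|^{\delta}$ with $c_{n}:=(2\pi)^{-3/2}\,2^{1-\delta}\int_{\mathbb{R}^{3}}|y|^{\delta}|g_{n}(y)|\,d^{3}y$, so everything reduces to proving $c_{n}\to 0$.

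For this I would insert $g_{n}(y)=\int\phi_{0}(x)\Phi_{n}(x,y)\,d^{3}x$ and $\Phi_{n}=-\lambda_{n}\sqrt{|v_{12}|}\,[H_{0}+k_{n}^{2}]^{-1}(v_{13}+v_{23})\psi_{n}$, and use the positivity of the kernel of $[H_{0}+k_{n}^{2}]^{-1}$ on $\mathbb{R}^{6}$ together with the exponential localization of $\omega_{0}(x):=\phi_{0}(x)|V_{12}(\alpha x)|^{1/2}$, which by (\ref{restr3}) and (\ref{tildephi1}) lies in $L^{1}(\mathbb{R}^{3})\cap L^{\infty}(\mathbb{R}^{3})$. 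Writing $G_{k_{n}}$ for the Green's function of $-\Delta+k_{n}^{2}$ on $\mathbb{R}^{6}$ and $\Xi_{n}(x',y';y):=\int_{\mathbb{R}^{3}}\omega_{0}(x)\,G_{k_{n}}\bigl((x,y)-(x',y')\bigr)\,d^{3}x$, a Cauchy--Schwarz inequality in the six source variables that splits off the factor $\bigl\||v_{s3}|^{1/2}\psi_{n}\bigr\|$ gives
\[
\int_{\mathbb{R}^{3}}|y|^{\delta}|g_{n}(y)|\,d^{3}y\ \le\ \lambda_{n}\sum_{s=1,2}\bigl\||v_{s3}|^{1/2}\psi_{n}\bigr\|\;J_{n}^{(s)},\qquad J_{n}^{(s)}:=\int_{\mathbb{R}^{3}}|y|^{\delta}\,\bigl\||v_{s3}|^{1/2}\,\Xi_{n}(\,\cdot\,;y)\bigr\|_{L^{2}(\mathbb{R}^{6})}\,d^{3}y .
\]
By the no--clustering theorem (Appendix in \cite{2}) one has $\bigl\||v_{s3}|^{1/2}\psi_{n}\bigr\|\to 0$, so it only remains to show that $\sup_{n}J_{n}^{(s)}<\infty$.

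The bound $\sup_{n}J_{n}^{(s)}<\infty$ is the heart of the matter, and I expect it to be the main obstacle. Here one uses the pointwise estimate $G_{k_{n}}(z)\le C|z|^{-4}\min\bigl(1,(k_{n}|z|)^{2}e^{-k_{n}|z|}\bigr)$, the exponential decay of $\omega_{0}$ in $x$, and the geometry of the pair $\{s,3\}$: in the Jacobi coordinates adapted to that pair the support of $v_{s3}$ is a tube about a three--dimensional subspace, and a point $(x,y)$ with $x$ near the origin stays at distance of order $|y|$ from that tube, which bounds $\Xi_{n}$ on the support of $v_{s3}$. The hypothesis $\int|V_{s3}|(1+|r|)^{2\delta}\,d^{3}r<\infty$ from R1 (which in particular gives $V_{s3}\in L^{1}$), together with $\delta<1/8$, is then exactly what absorbs the weight $|y|^{\delta}$ and makes the $y$--integral converge, the factor $e^{-k_{n}|z|}$ controlling the region $|y|\ge k_{n}^{-1}$; the delicate part is the uniformity in $n$, because $[H_{0}+k_{n}^{2}]^{-1}$ becomes long--ranged as $k_{n}\to 0$ and the $|y|^{\delta}$ weight cannot simply be discarded. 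If a direct treatment proves awkward, the same estimate can be routed through the operators $B_{12}^{-1}(k_{n})\sqrt{|v_{12}|}[H_{0}+k_{n}^{2}]^{-1}\sqrt{|v_{s3}|}$, whose uniform norm--boundedness and the bound $\|B_{12}^{-1}(k_{n})\Phi_{n}\|\to 0$ are already available from the proof of Theorem~\ref{th:2}. Once $\sup_{n}J_{n}^{(s)}<\infty$ is established, $c_{n}\le C\lambda_{n}\sum_{s}\bigl\||v_{s3}|^{1/2}\psi_{n}\bigr\|\to 0$, which proves the lemma.
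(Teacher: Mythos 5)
Your reduction of the lemma to the statement $c_n=\int|y|^{\delta}|g_n(y)|\,d^3y\to 0$ is exactly how the paper argues (its proof of Lemma~\ref{lem:3} consists of just this reduction, the real work being delegated to Lemma~\ref{lem:4}). The gap is in the step you yourself flag as the heart of the matter: the claim $\sup_n J_n^{(s)}<\infty$ is not merely delicate, it is false. Take $s=1$ and work in the coordinates $(\eta',\zeta')$ adapted to the pair $\{1,3\}$, in which $x'=m_{x\eta}\eta'+m_{x\zeta}\zeta'$, $y'=m_{y\eta}\eta'+m_{y\zeta}\zeta'$ and the weight $|V_{13}|$ confines $\eta'$ to a bounded region in the integrated sense. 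For $1\ll|y|\ll k_n^{-1}$ restrict the integration to $|\eta'|\le R$ and $|m_{y\zeta}\zeta'-y|\le|y|/2$, a set of volume of order $|y|^{3}$; there $|x'|+|y-y'|\le C|y|$, the six--dimensional Green's function obeys $G_{k_n}(z)\ge c|z|^{-4}$ for $k_n|z|\le 1$, and your $\omega_0=\phi_0\,|V_{12}(\alpha\,\cdot)|^{1/2}$ is nonnegative with positive integral (the pair $\{1,2\}$ is at critical coupling), so $\Xi_n(x',y';y)\ge c'|y|^{-4}$ on this set. Hence $\bigl\||v_{13}|^{1/2}\Xi_n(\cdot\,;y)\bigr\|_{L^2}\ge c''|y|^{-5/2}$ for $1\ll|y|\ll k_n^{-1}$, and therefore $J_n^{(1)}\ge c''\int_1^{1/(Ck_n)}r^{\delta+2-5/2}\,dr\sim k_n^{-(\delta+1/2)}\to\infty$. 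Since the no--clustering theorem provides no rate for $\bigl\||v_{s3}|^{1/2}\psi_n\bigr\|\to 0$ (and the rate suggested by (\ref{3bh3}) is only logarithmic in $k_n$), the product $\bigl\||v_{s3}|^{1/2}\psi_n\bigr\|\,J_n^{(s)}$ cannot be concluded to vanish. The divergence comes precisely from the long range of $[H_0+k_n^2]^{-1}$ along the region where $v_{s3}$ is concentrated, i.e.\ from the very point you identify as delicate, and it is not an artifact of lossy bounds.

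This is why the paper's Lemma~\ref{lem:4} does not split off $\bigl\||v_{s3}|^{1/2}\psi_n\bigr\|$ directly. It uses the decomposition of $F_n$ from Lemma~\ref{lem:1}: the operator $\tilde B_{13}^{-1}(k_n)$ is inserted so that the vanishing factors are $\|\Psi^{(i)}_n\|$ (Lemma~\ref{lem:2}; their uniform boundedness is exactly what the multiplier $\tilde t_n^{-1}(p_\zeta)$ buys), while the compensating multiplier $\tilde t_n(p_\zeta)\sim|p_\zeta|^{1-\delta}+k_n^{1-\delta}$ stays inside the weighted kernel estimate; there, after Lemma~\ref{lem:5}, one meets $\int_0^1 s^2\bigl(s^{1-\delta}+k_n^{1-\delta}\bigr)^2\bigl(s^2+k_n^2\bigr)^{-2-\delta}ds$, which is bounded uniformly in $n$, whereas without the $\tilde t_n^{\,2}$ factor the same integral grows like $k_n^{-1-2\delta}$ --- exactly the square of the divergence of your $J_n^{(s)}$. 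In addition, the part of $v_{13}$ acting on $\psi_n$ through $(v_{13})_-$ is handled via (\ref{gemor}) and the uniformly bounded, positivity preserving $Q_n$, rather than by a direct Cauchy--Schwarz. Your fallback --- routing through $B_{12}^{-1}(k_n)\sqrt{|v_{12}|}\,[H_0+k_n^2]^{-1}\sqrt{|v_{s3}|}$ and $\|B^{-1}_{12}(k_n)\Phi_n\|\to 0$ --- regularizes the $\{1,2\}$ channel (the variable $p_y$) and gives $L^2$ information, but what $c_n$ requires is a weighted $L^1$ bound in $y$, and the singularity to be tamed sits in the $\{s,3\}$ channel variable $p_\zeta$; that is precisely the role of $\tilde B_{13}(k_n)$, so this detour does not close the gap either.
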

\begin{proof}
The trivial inequality $|e^{ip_y\cdot y} - 1|\leq
|p_y|^{\delta}|y|^{\delta}$ implies that
\begin{eqnarray}
  \bigl|\hat g_n (p_y) - \hat g_n (0)\bigr| \leq \int d^3 y  \bigl|e^{ip_y\cdot
y} - 1\bigr| |g_n(y)|\leq |p_y|^{\delta} c_n ,
\end{eqnarray}
where $c_n =  \int d^3 y |y|^{\delta} |g_n (y)|$ goes to zero by Lemma~\ref{lem:4}.
 \end{proof}
The following lemma makes use of the absence of zero energy resonances in particle
pairs $\{1,3\}$ and $\{2,3\}$.
\begin{lemma}\label{lem:4}
 The sequence $c_n =  \int d^3 y \bigl( 1 + |y|^{\delta}\bigr) |g_n (y)|$ is well-defined and
goes to zero.
\end{lemma}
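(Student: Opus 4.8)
The plan has four parts: (i) finiteness of each $c_n$, (ii) a second iteration of the Schr\"odinger equation that peels the weight off onto a vanishing no--clustering factor, (iii) a uniform bound for the resulting weighted reduced operators, and (iv) identifying where the smallness of $\delta$ is used. Part (i) is routine. Since $E_n=-k_n^2<0$ lies strictly below the bottom $0$ of the essential spectrum of $H(\lambda_n)$ (no pair has a negative bound state and $\{1,2\}$ is only at threshold), a Combes--Thomas/Agmon estimate gives $|\psi_n(x,y)|\le C_n e^{-k_n\sqrt{|x|^2+|y|^2}}$; since the kernel of $[H_0+k_n^2]^{-1}$ decays like $e^{-k_n|\cdot|}$ and $\phi_0$ decays exponentially in $x$ by (\ref{tildephi1}), this fall--off propagates through (\ref{Phi_n2}) and (\ref{gn}), so $g_n$ decays exponentially and $c_n<\infty$ for every $n$.

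For $c_n\to0$ I would proceed in the spirit of Lemmas~\ref{lem:1} and~\ref{lem:2}. Split $g_n=g_n^{(1)}+g_n^{(2)}$ according to $v_{13}$ and $v_{23}$ in (\ref{Phi_n2}) and treat $g_n^{(1)}$, the other being identical after $1\leftrightarrow 2$. Substitute the rearrangement (\ref{ihami2}) for the $\psi_n$ inside $\Phi_n$, and for the $(v_{13})_-$-- and $(v_{23})_-$--pieces insert in addition (\ref{ihami3}) together with the uniformly norm--bounded operators $Q_n$, precisely as in (\ref{gemor})--(\ref{psin3}). After this double iteration $g_n^{(1)}$ is a finite sum of terms of the form $\lambda_n^{\,j}\,\langle\tilde\phi_0,\,[H_0+k_n^2]^{-1}\,w_{13}\,\mathcal{R}_n\,|w'|^{1/2}\,\Xi_n\rangle_x$, where $\tilde\phi_0(x):=\phi_0(x)|V_{12}(\alpha x)|^{1/2}$ decays exponentially, $\mathcal{R}_n$ is a positivity--preserving resolvent dominated by $[H_0+k_n^2]^{-1}$, $w_{13}\in\{(v_{13})_\pm,v_{13}\}$, $|w'|^{1/2}\in\{|v_{12}|^{1/2},(v_{13})_-^{1/2},(v_{23})_-^{1/2}\}$, and $\Xi_n:=|w'|^{1/2}\psi_n$ satisfies $\|\Xi_n\|\to0$ by the no--clustering theorem of \cite{2} (here R3, the absence of zero--energy resonances in $\{1,3\}$ and $\{2,3\}$, is used). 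Thus it suffices to show that the operators $\mathcal{S}_n\colon L^2(\mathbb{R}^6)\to L^1(\mathbb{R}^3_y)$, $\mathcal{S}_nF(y):=(1+|y|^\delta)\,\langle\tilde\phi_0,\,([H_0+k_n^2]^{-1}w_{13}\,\mathcal{R}_n\,|w'|^{1/2}F)(\cdot,y)\rangle_x$, are norm--bounded uniformly in $n$.

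For the bound on $\mathcal{S}_n$ I would pass to the adjoint $\mathcal{S}_n^*\colon L^\infty(\mathbb{R}^3_y)\to L^2(\mathbb{R}^6)$, $\mathcal{S}_n^*\chi=|w'|^{1/2}\mathcal{R}_n\,w_{13}\,[H_0+k_n^2]^{-1}\bigl(\tilde\phi_0(x)\,(1+|y|^\delta)\,\chi(y)\bigr)$, so that $\|\mathcal{S}_n\|_{L^2\to L^1}=\sup_{\|\chi\|_\infty\le1}\|\mathcal{S}_n^*\chi\|_2$. For such $\chi$, positivity of the $6$--dimensional kernel $G_{k_n}$ of $[H_0+k_n^2]^{-1}$, the bound $0\le G_{k_n}(\xi)\le C|\xi|^{-4}$ (uniform in $k_n$), the submultiplicativity $1+|y|^\delta\le(1+|y-y'|^\delta)(1+|y'|^\delta)$ and the exponential localization of $\tilde\phi_0$ in $x$ give $\bigl|[H_0+k_n^2]^{-1}(\tilde\phi_0(x)(1+|y|^\delta)\chi)\bigr|(x_0,y_0)\le(1+|y_0|^\delta)(\tilde\phi_0\ast\Gamma_n)(x_0)$, where $\Gamma_n(a):=\int(1+|w|^\delta)G_{k_n}(\sqrt{|a|^2+|w|^2})\,dw\le C(|a|^{-1}+|a|^{-(1-\delta)})$, hence $(\tilde\phi_0\ast\Gamma_n)(x_0)\le C\langle x_0\rangle^{-(1-\delta)}$ uniformly in $n$. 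Applying $w_{13}$, the second resolvent $\mathcal{R}_n\le[H_0+k_n^2]^{-1}$, and interposing $\tilde B_{13}(k_n)$, $\tilde B_{13}^{-1}(k_n)$ of (\ref{b13})--(\ref{tail}) to control the non--decay of $w_{13}$ along the intercluster coordinate $\zeta$, one integrates $G_{k_n}\le C|\xi|^{-4}$ along the three--plane $\{\eta=0\}$ and obtains $\bigl|\mathcal{R}_n w_{13}[H_0+k_n^2]^{-1}(\cdots)\bigr|(x_1,y_1)\le C\,\langle y_1\rangle^{2\delta-2}$ away from the origin and locally square--integrable near it, on $\supp|w'|$ (the $(v_{s3})_-$--cases of $w'$ here need the $Q_n$--$\tilde B_{s3}$ bookkeeping of Lemma~\ref{lem:2}, since then both $w_{13}$ and $|w'|$ localise the same three--plane). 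Finally $\|\mathcal{S}_n^*\chi\|_2^2\le C\int|w'(x_1,y_1)|\,\langle y_1\rangle^{4\delta-4}\,dx_1dy_1\le C'\int_{\mathbb{R}^3}\langle y\rangle^{4\delta-4}\,dy<\infty$, uniformly in $n$.

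The main obstacle is exactly this last estimate: with a single resolvent one only gets $\langle y\rangle^{-1}$ decay of the reduced kernel, which is not integrable against a polynomial weight in three dimensions and not uniform once the $e^{-k_n|\cdot|}$ factor is lost as $k_n\to0$; the estimate becomes uniform only after the Schr\"odinger equation is iterated once more so that \emph{two} resolvents are available, $\tilde B_{s3}(k_n)$ absorbs the transversal non--decay of $v_{s3}$, and a vanishing no--clustering factor $|w'|^{1/2}\psi_n$ has been split off. Convergence of $\int_{\mathbb{R}^3}\langle y\rangle^{4\delta-4}\,dy$ already needs $\delta<1/4$, and the remaining slack down to $\delta<1/8$ is consumed by the $\tilde B_{s3}$--regularisations (whose symbol $\sim|p_\zeta|^{1-\delta}$ produces integrals $\int_{|p_\zeta|\le1}|p_\zeta|^{-(3-2\delta)}d^3p_\zeta$, cf.\ (\ref{xw9})) and by the weight--splittings $|y|^\delta\lesssim|\eta^{(s3)}|^\delta+|\zeta^{(s3)}|^\delta$, which call on the weighted integrability $\int|V_{s3}(r)|(1+|r|)^{2\delta}\,dr<\infty$ granted by R1.
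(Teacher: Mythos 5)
Your overall strategy is the paper's: the same splitting $g_n=g_n^{(1)}+g_n^{(2)}$, the same double iteration of the Schr\"odinger equation through (\ref{ihami2})--(\ref{ihami3}) with the $Q_n$ operators, the same reduction to ``uniformly bounded weighted operator applied to a factor that vanishes by the no--clustering theorem,'' and the same reliance on the $\tilde B_{13}$--regularisation to tame the non--decay along $\zeta$. Where you genuinely diverge is in how the uniform weighted bound is proved. The paper keeps everything to the right of the \emph{first} resolvent inside the $L^2$--vanishing factors $\Psi^{(i)}_n$ (Lemma~\ref{lem:2}), so that only the single operator $\mathcal{L}_n=[H_0+k_n^2]^{-1}|v_{13}|^{1/2}\tilde B_{13}(k_n)$ ever meets the weight $(1+|y|^\delta)e^{-\tilde b_2|x|}$; the resulting quantity $\int |V_{13}|\,\tilde t_n^2 J^2$ is then computed in the mixed $(\eta',p_\zeta)$ representation via the explicit estimate of $J(\eta',p_\zeta)$ in Lemma~\ref{lem:5} and a one--dimensional $p_\zeta$--integral (which indeed converges exactly for $\delta<1/4$, matching your threshold). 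You instead push pointwise position--space Green's--function bounds through the entire chain and close by $L^2\to L^1$ duality. This is workable in principle and arguably more transparent, but it is also where your write--up has two soft spots.

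First, the claimed decay $\langle y_1\rangle^{2\delta-2}$ of the reduced kernel on $\supp|w'|$ is correct only when $\supp w'$ is transversal to the plane $\{\eta=0\}$ carrying $v_{13}$ (e.g.\ $w'=|v_{12}|$ or $(v_{23})_-$), because there the extra factor $|\eta_1|^{-1}\sim\langle y_1\rangle^{-1}$ from integrating $G_{k_n}\le C|\xi|^{-4}$ over a distant three--plane is available. When $w'=(v_{13})_-$ both localisations sit on the \emph{same} plane, and convolving the non--integrable tail $\langle\zeta\rangle^{2\delta-1}$ with an $L^1(\mathbb{R}^3_\zeta)$ kernel cannot improve its decay; only the cancellation built into $\tilde B_{13}$ (symbol $\sim|p_\zeta|^{1-\delta}$) gains the missing $1-\delta$ powers, leaving $\langle\zeta_1\rangle^{3\delta-2}$ rather than $\langle\zeta_1\rangle^{2\delta-2}$. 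That still makes $\int\langle\zeta\rangle^{6\delta-4}d^3\zeta$ finite for $\delta<1/6$, hence for $\delta<1/8$, but the uniform exponent you state is an overclaim and the case distinction should be made explicit. Second, pointwise bounds cannot be propagated through $Q_n$: it is positivity preserving and uniformly $L^2$--bounded, but a uniformly $L^2$--bounded positivity--preserving operator does not dominate pointwise estimates, so your schematic ``$\mathcal{R}_n$ dominated by $[H_0+k_n^2]^{-1}$'' fails for the composite $[H_0+k_n^2]^{-1}(v_{13})_-^{1/2}Q_n(v_{13})_-^{1/2}[H_0+k_n^2]^{-1}$ appearing in the third term. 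For that term you must do what the paper does: cut the chain at an $L^2$ level on both sides of $Q_n$ (using the uniform boundedness of $\mathcal{T}^{(1)}_n$, $\mathcal{T}^{(2,3)}_n$ from Lemma~\ref{lem:2}) and apply the weighted pointwise analysis only to the leftmost resolvent. With these two repairs your argument closes and gives the same conclusion.
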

\begin{proof}
By definitions (\ref{gn}) and (\ref{Phi_n2}) we have $|g_n(y)| \leq |g^{(1)}_n(y)|
 + |g^{(2)}_n(y)|$, where
\begin{eqnarray}
 g^{(1)}_n(y) := \lambda_n \int d^3 x \, \phi_0 |v_{12}|^{1/2} \bigl[H_0 + k_n^2 \bigr]^{-1}
v_{13}\psi_n , \\
 g^{(2)}_n(y) := \lambda_n \int d^3 x \, \phi_0 |v_{12}|^{1/2} \bigl[H_0 + k_n^2 \bigr]^{-1}
v_{23}\psi_n  . 
\end{eqnarray}
Consequently $c_n \leq c_n^{(1)} + c_n^{(2)}$, where 
\begin{equation}\label{xw15}
 c_n^{(i)} := \int d^3 y \bigl(1 + |y|^{\delta}\bigr) |g^{(i)}_n(y) |. 
\end{equation}
Below we shall prove that $c_n^{(1)} \to 0$, the fact that 
$c_n^{(2)} \to 0$ is proved analogously. Let us mention that appearing below integrals and interchanged oder of integration 
can be easily justified using the pointwise exponential fall off of $\psi_n$ \cite{reed}. 

 We have
\begin{equation}
|g^{(1)}_n(y)| \leq \int d^3 x \,  \bigl| V_{12} (\alpha x)\bigr|^{1/2} \phi_0 (x) |F_n|(x,y),
\end{equation}
where $F_n$ was defined in (\ref{xwsplit}). 
On account of R1 and (\ref{tildephi1}) it follows that 
\begin{equation}\label{xw13}
|g^{(1)}_n(y)| \leq \tilde b_1 \int d^3 x  e^{-\tilde b_2 |x|}  |F_n|(x,y) , 
\end{equation}
where $\tilde b_{1,2} > 0$ are constants.
Using (\ref{xwsplit}) and (\ref{18.04/1})--(\ref{18.04/2}) gives 
\begin{eqnarray}
 |F_n| \leq \sum_{i=1}^3 |F^{(i)}_n| \leq \sum_{i=1}^3 \tilde F^{(i)}_n  , \label{xw14}\\
\tilde F^{(i)}_n := \bigl[ H_0 + k_n^2\bigr]^{-1} |v_{13}|^{1/2}
\tilde B_{13}(k_n) \Psi^{(i)}_n  . \label{xw17}
\end{eqnarray}
Substituting (\ref{xw13}), (\ref{xw14}) into (\ref{xw15}) we obtain
\begin{equation}\label{xw22}
\fl c_n^{(1)}  
\leq  \tilde b_1 \sum_{i=1}^3 \int d^3 \eta \; d^3 \zeta  \; \Bigl( 1 + \bigl|m_{y\eta}
\eta +  m_{y\zeta} \zeta \bigr|^{\delta}\Bigr) e^{-\tilde b_2 |m_{x\eta} \eta +
m_{x\zeta} \zeta|}  \tilde F^{(i)}_n (\eta , \zeta) . 
\end{equation}
Let us consider the term $ \tilde F^{(i)}_n (\eta , \zeta) $. Acting on it with 
direct and inverse partial Fourier
transforms (\ref{xw20}) we get 
\begin{equation}\label{8.11}
\tilde F^{(i)}_n = \mathcal{F}_{13}^{-1} \bigl[ -\Delta_\eta + p_\zeta^2 + k_n^2 \bigr]^{-1} |v_{13}|^{1/2} t_n (p_\zeta)\hat \Psi^{(i)}_n , 
\end{equation}
where $\hat \Psi^{(i)}_n  = \mathcal{F}_{13}  \Psi^{(i)}_n $. This can be explicitly rewritten as 
\begin{equation}
 \fl  \tilde F^{(i)}_n (\eta , \zeta)
= \frac 1{2^{7/2} \pi^{5/2}} \int d^3 \eta' d^3 p_\zeta \; e^{ip_\zeta \cdot \zeta}\:
\bigl| V_{13}(\alpha' \eta')\bigr|^{1/2}   \frac{e^{-\sqrt{p_\zeta^2 +
k_n^2}|\eta-\eta'|}}{|\eta-\eta'|} \tilde t_n (p_\zeta)\hat \Psi^{(i)}_n (\eta',
p_\zeta) . 
\end{equation}
Hence,
\begin{equation}\label{xw23}
 \fl  \bigl| \tilde F^{(i)}_n (\eta , \zeta) \bigr| \leq \frac 1{2^{7/2} \pi^{5/2}}  \int d^3
\eta' d^3 p_\zeta \; \bigl| V_{13}(\alpha' \eta')\bigr|^{1/2}
\frac{e^{-\sqrt{p_\zeta^2 + k_n^2}|\eta-\eta'|}}{|\eta-\eta'|} \tilde t_n
(p_\zeta) \bigl| \hat \Psi^{(i)}_n (\eta', p_\zeta) \bigr| . 
\end{equation}
Substituting (\ref{xw23}) into (\ref{xw22}) and interchanging the order of integration we
obtain the inequality
\begin{equation}\label{dday5}
\fl  c_n^{(1)} \leq \frac {\tilde b_1}{2^{7/2} \pi^{5/2}}
\sum_{i=1}^3
\int d^3 \eta' \int d^3 p_\zeta \; \bigl| V_{13}(\alpha' \eta')\bigr|^{1/2}
\tilde t_n (p_\zeta) \bigl| \hat \Psi^{(i)}_n (\eta', p_\zeta) \bigr| J (\eta',
p_\zeta)  ,
\end{equation}
where we define
\begin{equation}\label{xw34}
 \fl  J (\eta', p_\zeta) :=  \int d^3 \eta \int d^3 \zeta  \;
 \frac{e^{-\sqrt{p_\zeta^2 + k_n^2}|\eta-\eta'|}}{|\eta-\eta'|} \Bigl( 1+  \bigl|m_{y\eta}
\eta +  m_{y\zeta} \zeta \bigr|^{\delta} \Bigr) e^{-\tilde b_2 |m_{x\eta} \eta +
m_{x\zeta} \zeta|} . 
\end{equation}
Applying the Cauchy--Schwarz inequality to (\ref{dday5}) gives 
\begin{equation}\label{dday6}
\fl  c_n^{(1)} \leq \frac {\tilde b_1 }{2^{7/2} \pi^{5/2}}
\sum_{i=1}^3 \bigl\| \Psi^{(i)}_n \bigr\|
\left( \int d^3 \eta' \int d^3 p_\zeta \; \bigl| V_{13}(\alpha' \eta')\bigr| \,
{\tilde t}^{\; 2}_n (p_\zeta) \, J^2 (\eta', p_\zeta) \right)^{1/2}  .
\end{equation}
Inserting the estimate from Lemma~\ref{lem:5} we finally get
\begin{eqnarray}
\fl  c_n^{(1)} \leq \frac {\tilde b_1
c\sqrt{C}}{ 2^{5/2} \pi^2 } \sum_{i=1}^3 \bigl\| \Psi^{(i)}_n \bigr\|
\left( \int_0^1 \frac{s^2 \bigl(s^{1-\delta} + k_n^{1-\delta}\bigr)^2}{\bigl(s^2
+ k_n^2\bigr)^{2+\delta}}ds  + \int_1^\infty \frac{s^2 \bigl(1 + k_n^{1-\delta}\bigr)^2}{\bigl(s^2 +
k_n^2\bigr)^2} ds\right)^{1/2} \label{xw25} .
\end{eqnarray}
where $C := \int d^3 \eta' \bigl| V_{13}(\alpha'
\eta')\bigr|(1+|\eta'|)^{2\delta} $ is finite by (\ref{restr}). The last
integral in (\ref{xw25}) is clearly uniformly bounded for all $n$.
To see that the first integral in (\ref{xw25}) is uniformly bounded we use the following
inequality
\begin{equation}
 (s^{1-\delta} +k_n^{1-\delta})^2 \leq 2 (s^{1-\delta})^2 + 2 (k_n^{1-\delta})^2
\leq 4 (s^2 + k_n^2)^{1-\delta} , 
\end{equation}
where we used $a^\alpha + b^\alpha \leq 2(a + b)^\alpha$ for any $a,b \geq 0$ and $0 \leq \alpha \leq 1$. 
Hence,
\begin{equation}
 \int_0^1 \frac{s^2 \bigl(s^{1-\delta} + k_n^{1-\delta}\bigr)^2}{\bigl(s^2 +
k_n^2\bigr)^{2+\delta}}ds \leq 4\int_0^1
\frac{s^2 ds}{\bigl(s^2 + k_n^2\bigr)^{1+2 \delta}} \leq 4\int_0^1
\frac{s^2 }{s^{2+4\delta}}ds \leq 8 . 
\end{equation}
Thus the rhs of (\ref{xw25}) goes to zero by Lemma~\ref{lem:2}. 
 \end{proof}

\begin{lemma}\label{lem:5}
 The following estimates hold
\begin{eqnarray}
J (\eta', p_\zeta) \leq \frac{c (1+|\eta'|)^{\delta}}{p_\zeta^2 + k_n^2} \quad
\quad \mathrm{for} \;\;\; |p_\zeta| \geq 1 , \\
J (\eta', p_\zeta) \leq \frac{c (1+|\eta'|)^{\delta}}{\bigl(p_\zeta^2 +
k_n^2\bigr)^{1+\delta/2}} \quad \quad \mathrm{for} \;\;\; |p_\zeta| \leq 1  , 
\end{eqnarray}
where $c > 0$ is a constant.
\end{lemma}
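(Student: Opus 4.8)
The plan is to unravel the double integral in (\ref{xw34}) one layer at a time, using that the passage from $(\eta,\zeta)$ to $(x,y)$ is an orthogonal change of coordinates. Write $\kappa := \sqrt{p_\zeta^2 + k_n^2}$. First I would freeze $\eta$ and dispose of the $\zeta$-integral. Since $m_{x\zeta}\neq 0$, one may substitute $u := m_{x\eta}\eta + m_{x\zeta}\zeta$, so that $d^3\zeta = \abs{m_{x\zeta}}^{-3}d^3 u$ and, by the orthogonality relations $m_{x\eta}^2+m_{x\zeta}^2 = 1$ and $m_{y\eta}m_{x\eta}+m_{y\zeta}m_{x\zeta}=0$, the vector $m_{y\eta}\eta+m_{y\zeta}\zeta$ turns into a linear combination $c_1\eta + c_2 u$ whose coefficients are finite and expressible through the mass ratios alone. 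Using $\abs{a+b}^\delta \le \abs{a}^\delta + \abs{b}^\delta$ (valid for $0\le\delta\le 1$) together with the finiteness of $\int d^3 u\,(1+\abs{u}^\delta)e^{-\tilde b_2\abs{u}}$, this yields
\[ \int d^3\zeta\ \Bigl(1+\bigl|m_{y\eta}\eta+m_{y\zeta}\zeta\bigr|^\delta\Bigr)\,e^{-\tilde b_2\abs{m_{x\eta}\eta+m_{x\zeta}\zeta}}\ \le\ c'\,\bigl(1+\abs{\eta}^\delta\bigr), \]
with a constant $c'>0$ depending only on the masses and on $\tilde b_2$.

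Second, this reduces (\ref{xw34}) to $J(\eta',p_\zeta)\le c'\int d^3\eta\ \abs{\eta-\eta'}^{-1}e^{-\kappa\abs{\eta-\eta'}}(1+\abs{\eta}^\delta)$. Shifting $w:=\eta-\eta'$ and applying the triangle inequality together with the subadditivity of $t\mapsto t^\delta$ in the form $1+\abs{\eta}^\delta \le 1+\abs{\eta'}^\delta+\abs{w}^\delta \le 2\,(1+\abs{\eta'})^\delta(1+\abs{w}^\delta)$, the radial integral factors out and can be evaluated explicitly:
\[ J(\eta',p_\zeta)\ \le\ 2c'\,(1+\abs{\eta'})^\delta\int d^3 w\ \frac{e^{-\kappa\abs{w}}}{\abs{w}}\bigl(1+\abs{w}^\delta\bigr)\ =\ 8\pi c'\,(1+\abs{\eta'})^\delta\Bigl(\kappa^{-2}+\Gamma(2+\delta)\,\kappa^{-2-\delta}\Bigr), \]
since $\int_0^\infty r\,e^{-\kappa r}(1+r^\delta)\,dr = \kappa^{-2}+\Gamma(2+\delta)\kappa^{-2-\delta}$.

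Third, I would split into the two regimes. If $\abs{p_\zeta}\ge 1$ then $\kappa\ge 1$, so $\kappa^{-2-\delta}\le\kappa^{-2}$ and the parenthesis is $\le(1+\Gamma(2+\delta))\kappa^{-2}$, which is the first claimed bound with $\kappa^2=p_\zeta^2+k_n^2$. If $\abs{p_\zeta}\le 1$ then $\kappa\le\sqrt{1+k_n^2}\le\kappa_0:=\sqrt{1+k_0^2}$ uniformly in $n$, because $k_n=\sqrt{\abs{E_n}}$ stays bounded by some $k_0$ under R2; hence $\kappa^{-2}=\kappa^\delta\kappa^{-2-\delta}\le\kappa_0^\delta\,\kappa^{-2-\delta}$ and the parenthesis is $\le(\kappa_0^\delta+\Gamma(2+\delta))\kappa^{-2-\delta}=(\kappa_0^\delta+\Gamma(2+\delta))(p_\zeta^2+k_n^2)^{-1-\delta/2}$, the second bound. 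Taking $c$ to be the larger of the two resulting constants concludes the argument.

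The only point that needs care is the first step: keeping track of the orthogonal-transformation coefficients and checking that, once $\zeta$ is traded for $u=m_{x\eta}\eta+m_{x\zeta}\zeta$, the residual $\eta$-dependence of $m_{y\eta}\eta+m_{y\zeta}\zeta$ enters only linearly with mass-dependent coefficients, so that the $u$-integral converges and contributes at most a factor $1+\abs{\eta}^\delta$. Everything after that is the elementary radial integral above and the two case distinctions on the size of $\kappa$, so no genuine analytic obstacle arises.
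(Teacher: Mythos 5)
Your proof is correct and follows essentially the same route as the paper: integrate out $\zeta$ using subadditivity of $t\mapsto t^{\delta}$ to get the factor $c'(1+|\eta|)^{\delta}$, shift to $w=\eta-\eta'$, and evaluate the resulting radial Yukawa-type integral, splitting into the regimes $|p_\zeta|\ge 1$ and $|p_\zeta|\le 1$. You merely fill in the elementary computation that the paper compresses into ``Now the statement easily follows.''
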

\begin{proof}
Using the trivial inequality $|z+z'|^{\delta} \leq |z|^{\delta} + |z'|^{\delta}$
for any $z,z' \in \mathbb{R}^3$ it is easy to see that
\begin{equation}\label{dday1}
 \int d^3 \zeta \; \Bigl( 1 + \bigl|m_{y\eta} \eta +  m_{y\zeta} \zeta \bigr|^{\delta} \Bigr) 
e^{-\tilde b_2 |m_{x\eta} \eta +  m_{x\zeta} \zeta|} \leq c' (1+|\eta|)^{\delta} , 
\end{equation}
where $c' > 0$ is some constant. Using (\ref{xw34}) and (\ref{dday1}) we obtain
\begin{eqnarray}
 \fl J (\eta', p_\zeta) \leq c' \int d^3 \eta \;  \frac{e^{-\sqrt{p_\zeta^2 +
k_n^2}|\eta-\eta'|}}{|\eta-\eta'|} (1+|\eta|)^{\delta} 
\leq c' \int d^3 t \;  \frac{e^{-\sqrt{p_\zeta^2 + k_n^2}|t|}}{|t|}
 (1+|t+\eta'|)^{\delta} \nonumber\\
\fl \leq
 c' \int d^3 t \;  \frac{e^{-\sqrt{p_\zeta^2 + k_n^2}|t|}}{|t|}   \bigl\{ 1
+|\eta'|^{\delta}+|t|^{\delta}\bigr\}  . 
\end{eqnarray}
Now the statement easily follows. 	
 \end{proof}

\begin{remark}
The proof of Lemma~\ref{lem:3} is not just a mathematical formality, as can be illustrated by the following example. Suppose that two particles $2,3$ are identical and 
$V_{ik} \leq 0$ for all $1 \leq i < k \leq 3$. Suppose also that $H \equiv H(1)\geq 0$, where $H(1)$ is defined through (\ref{hami}), and particle pairs $\{1,2\}$ and $\{1,3\}$ 
have zero energy resonances. In this case there exists \cite{yafaev,sobol,8.efimov} an orthonormal sequence $\phi_n$ such that $H \phi_n = E_n \phi_n$, where $E_n < 0, E_n \to 0$. Similar to Lemma~\ref{lem:1} 
one can prove that $\phi _n = f_n^{(12)} + f_n^{(13)}$, where $f_n^{(ik)} := [H_0 + k_n^2]^{-1} |v_{ik}|\phi_n$ and the sequences $f_n^{(12)},  f_n^{(13)}$ must totally spread. 
However, a relation like (\ref{3bh3}) for $f_n^{(12)}$ (and a similar relation for $f_n^{(13)}$ with rotated Jacobi coordinates) would be wrong. Indeed, as we have already mentioned in the remark after Theorem~\ref{th:2} 
one can choose the mass ratios in such a way that the sequence $\phi_n$ would have an angular momentum different from zero. At the same time,  the limiting expression in (\ref{3bh3}) always has zero angular momentum. 
Additionally, one can prove that $(\phi_n, \phi_{n+1}) = 0$ would not hold in this case in the limit of large $n$. This example demonstrates that the condition that only 
one particle pair has a zero energy resonance is crucial to the proof of Lemma~\ref{lem:3}.  
\end{remark}

Finally, let us show how the angular probability distribution in (\ref{sony}) can be derived using a less rigorous but more physical approach. The derivation below was proposed 
by one of the referees, whose contribution is gratefully acknowledged. Suppose that the interaction between particles $1,2$ depends on $|x|$ and is resonant, while other 
pair--interactions are non--resonant. 
Let us consider the ground state wave function $\psi_\infty (x,y) > 0$ of the Hamiltonian (\ref{hami}) for $\lambda=\lambda_{cr}$, which as we know from \cite{1} is not 
normalizable. The wave function $\psi_\infty$ obeys the equation $\bigl[H_0 + v_{12} + v_{13} + v_{23}\bigr] \psi_\infty = 0$, where the interactions $v_{13}, v_{23}$ 
can be dropped because they are non--resonant (c.f.  Lemma~\ref{lem:1}). Since the rest term in the Hamiltonian is invariant with respect to independent rotations of 
vectors $x$ and $ y$, the ground state should possess the same symmetry, that is, we can write the wave function as $\psi_\infty (|x|,|y|)$. Following the recipe in \cite{8.efimov,peierls} 
we can replace the resonant interaction $v_{12}$ through the boundary condition $\partial(|x|\psi_\infty)/\partial|x| = 0$ and solve instead the equation $H_0 \psi_\infty = 0$ 
using this boundary condition. Setting $\psi_0 (|x|,|y|) := |x||y|\psi_\infty (|x|,|y|)$ we obtain the following equation 
\begin{equation}
 \left( \frac{\partial^2}{\partial |x|^2} + \frac{\partial^2}{\partial|y|^2} \right) \psi_0 (|x|,|y|) = 0 , \label{17:17:1}
\end{equation}
where $\psi_0 (|x|,|y|)$ should satisfy boundary conditions $\partial \psi_0/\partial |x| = 0$ and $\psi_0 (|x|, 0) = 0$. In polar coordinates (\ref{17:17:1}) reads 
\begin{equation}
 \frac 1{\rho} \frac{\partial  \psi_0 (\rho, \theta)}{\partial \rho} + \frac{\partial^2  \psi_0 (\rho, \theta)}{\partial \rho^2} +  
\frac 1{\rho^2} \frac{\partial^2  \psi_0 (\rho, \theta)}{\partial \theta^2} = 0 , \label{ref10.1;1}
\end{equation}
where $\rho, \theta$ were defined in Sec.~\ref{sec:1}. Separating radial and angular variables 
one easily finds that the solution of (\ref{ref10.1;1}), which satisfies the aforementioned boundary conditions, is given by $\rho^{-n} \sin (n \theta)$ for $n= 1, 3, 5, \ldots$. The non--normalizable wavefunction corresponds to $n=1$, which gives 
$ \psi_0 (\rho, \theta) = \rho^{-1} \sin (\theta)$. Returning back to the original 
wave function $\psi_\infty$  results in $\psi_\infty (\rho, \theta) =[\rho^{-3} \cos\theta]^{-1}$. This is the expression in (\ref{17:23:1}) 
that we obtain after removing the normalization factor and setting $k_n = 0$. This angular dependence in $\psi_\infty (\rho, \theta)$ leads to the universal angular 
probability distribution (\ref{sony}). 

\section{Physical Applications}\label{sec:phys} 

 In nuclear physics one encounters nuclei \cite{vaagen1}, which effectively
possess the three--particle Borromean structure consisting of two neutrons and a tightly bound 
core. In most applications the core can be well treated as a structureless particle.  
Borromean in this context means that the three constituents are pairwise unbound rather like
heraldic Borromean rings. The ground states in some of these nuclei are weakly bound and two neutrons form
a dilute halo around the core. Thereby a substantial part of the wavefunction is located in the classically forbidden region so that resulting 
inter-particle distances exceed by far the range of the interaction. 
Typical examples of such halo nuclei are weakly bound $^6$He and $^{11}$Li. 
The calculated density correlation plots in \cite{vaagen1,vaagen2} 
reveal the formation of the so--called ``dineutron peak'' in the ground state. There is another peak called a cigar--like peak but 
the substantial part of the wave function that is responsible for the halo formation 
concentrates in the dineutron peak. The dineutron peak is remarkably well fitted by the angular probability distribution in (\ref{sony}). 

Additional applications one could find in Efimov physics. 
The so--called three--particle Efimov states predicted in \cite{8.efimov} appear when  two 
binary subsystems either have very large scattering lengths or bound states close to zero energy threshold. 
Efimov states were found experimentally in the ultracold Bose gas of cesium atoms \cite{8.kraemer}. In \cite{1} we predicted the existence 
of very spatially extended halo-like states for three atoms near zero-energy threshold, if one pair of atoms has a
large scattering length (that is, it is close to the zero-energy resonance). These states can be looked for in ultracold gas mixtures prepared
through the appropriate Feshbach tuning. The reported result shows that the density distribution in such system of three atoms would have a universal form 
described by (\ref{sony}), which at sufficiently large distance should match the nucleon density in nuclear halos. 

\ack 

The author would like to thank Prof. Walter Greiner for the warm hospitality at FIAS.  

\section*{References}


\begin{thebibliography}{99}

\bibitem{1} Gridnev D K 2012 \textit{J. Phys. A: Math. Theor. } \textbf{45} 175203; arXiv:1111.6788v2

\bibitem{2} Gridnev D K 2012 \textit{J. Phys. A: Math. Theor. } \textbf{45} 395302; arXiv:1112.0112v2

\bibitem{babaev} Petrov S V, Jarovoy S S and Babaev Yu A 1987 \textit{J. Phys. B: At. Mol.
Phys.} \textbf{20} 4679

\bibitem{yafaev} Yafaev D R 1974 \textit{Math. USSR Sb.} \textbf{23} 535; Yafaev D R 1975 \textit{Notes
of LOMI Seminars} \textbf{51} 203 (Russian)

\bibitem{sobol} Sobolev A V 1993 \textit{Commun. Math. Phys.} \textbf{156} 101

\bibitem{klaus1} Klaus M and Simon B 1980 \textit{Ann. Phys.}, NY \textbf{130} 251

\bibitem{klaus2} Klaus M and Simon B 1980 \textit{Commun. Math. Phys.} \textbf{78} 153

\bibitem{garcia} Gridnev D K and Garcia M E 2007, \textit{J. Phys. A: Math. Theor. } \textbf{40} 9003 

\bibitem{bolle} Bolle D, Gesztesy F and Schweiger W 1985 \textit{J.~Math.~Phys.} {\bf 26} 1661 

\bibitem{bollenew} Gridnev D K 2012 \textit{J.~Math.~Phys.} {\bf 53} 102108 

\bibitem{8.efimov} Efimov V 1970 \textit{Phys. Lett. B} \textbf{33} 563; Efimov V 1971 \textit{Sov. J. Nucl. Phys.} \textbf{12} 589

\bibitem{hwhammer} Helfrich K and Hammer H-W 2011 \textit{J. Phys. B: At. Mol. Opt. Phys.} \textbf{44} 215301

\bibitem{efimnucl} Efimov V 1973 \textit{Nucl. Phys.} \textbf{A 210} 157

\bibitem{bransden} Bransden B H and Joachain C J 1990 {\em Physics of Atoms and
Molecules} (Essex, England: Longman Scientific and Technical/Harlow)  





\bibitem{bartle} Bartle R G 1995 {\em The Elements of Integration and Lebesgue Measure} (New York: John Wiley \& Sons)  


\bibitem{tables} Gradshteyn I S and Ryzhik I M 1994 {\em Table of Integrals, Series, and Products} (London: Academic Press)  

\bibitem{reed} Reed M and Simon B 1975 \textit{Methods of Modern Mathematical Physics} vol 2 (New York: Academic Press);
Reed M and Simon B 1978 \textit{Methods of Modern Mathematical Physics} vol 4 (New York: Academic Press)

\bibitem{peierls} Bethe H A and Peierls R 1935 \textit{Proc. R. Soc. Lond.} \textbf{A 148} 146 


\bibitem{vaagen1}
Zhukov M V, Danilin B V, Fedorov D V, Bang J M, Thompson I J and Vaagen J S 1993 \textit{Phys. Rep.} \textbf{231} 151


\bibitem{vaagen2}
Oganessian Yu Ts, Zagrebaev V I and Vaagen J S 1999 \textit{Phys.~Rev.~Lett.} \textbf{82} 4996 


\bibitem{8.kraemer} Kraemer T \textit{et al} 2006 \textit{Nature} \textbf{440} 315 



\end{thebibliography}
\end{document}